\documentclass{article}
\usepackage[utf8]{inputenc}
\usepackage{amsmath}
\usepackage{amssymb}
\usepackage{amsthm}
\usepackage{fullpage}
\usepackage{verbatim}

\newtheorem{theorem}{Theorem}
\newtheorem{lemma}[theorem]{Lemma}
\newtheorem{lem}[theorem]{Lemma}

\usepackage{tikz}
\usetikzlibrary{arrows.meta}
\usetikzlibrary{patterns}
\usetikzlibrary{decorations.pathreplacing, calligraphy}

\renewcommand{\epsilon}{\varepsilon}
\usepackage[colorinlistoftodos,textsize=tiny,textwidth=2cm,color=green!50!gray]{todonotes} %,disable

\title{A PTAS for Minimizing Weighted Flow Time on a Single Machine}
\author{Alexander Armbruster \and Lars Rohwedder \and Andreas Wiese}
\date{}

\global\long\def\N{\mathbb{N}}%

\begin{document}
\global\long\def\beg{\mathrm{beg}}%
\global\long\def\en{\mathrm{end}}%
\global\long\def\len{\mathrm{len}}%
\global\long\def\OPT{\mathrm{OPT}}%
\global\long\def\down{\downarrow}%
\global\long\def\N{\mathbb{N}}%
\global\long\def\R{\mathcal{R}}%
\global\long\def\L{\mathcal{L}}%
\global\long\def\C{\mathcal{C}}%
\global\long\def\k{k}%

\maketitle

\begin{abstract}
An important objective in scheduling literature is to minimize the sum of  weighted flow times.
We are given a set of jobs where
each job is characterized by a release time, a processing time, and
a weight. Our goal is to find a preemptive schedule on a single machine that minimizes
the sum of the weighted flow times of the jobs, where the flow time
of a job is the time between its completion time and its release time.
The currently best known polynomial time algorithm for the problem
is a $(2+\epsilon)$-approximation by Rohwedder and Wiese {[}STOC
2021{]} which builds on the prior break-through result by Batra, Garg,
and Kumar {[}FOCS 2018{]} who found the first pseudo-polynomial time
constant factor approximation algorithm for the problem, and on the
result by Feige, Kulkarni, and Li~{[}SODA 2019{]} who turned the
latter into a polynomial time algorithm. However, it remains open
whether the problem admits a PTAS. 

We answer this question in the affirmative and present a polynomial
time $(1+\epsilon)$-approximation algorithm for weighted flow time
on a single machine. 
%Like the mentioned $(2+\epsilon)$-approximation
%algorithm, we reduce the problem to a geometric covering problem,
%losing a factor $1+\epsilon$ in the approximation ratio.
We rely on a reduction of the problem to a geometric covering problem,
which was introduced in the mentioned $(2+\epsilon)$-approximation algorithm, losing a factor $1+\epsilon$ in the approximation ratio.
However, unlike that algorithm, we solve the resulting instances of this
problem \emph{exactly}, 
%with a
%dynamic program, 
rather than losing a factor $2+\epsilon$.
%like the previous algorithm. 
Key for this is to identify and exploit
structural properties of instances of the geometric covering problem
which arise in 
the reduction from weighted flow time.
\end{abstract}

\section{Introduction}

A natural problem in scheduling is to minimize the sum of weighted
flow times of the given jobs (or equivalently, the average weighted flow time). 
In this paper, we study the setting of
a single machine in which we allow the jobs to be preempted (and resumed
later). Formally, the input consists of a set of jobs $J$, where
each job $j\in J$ is characterized by a processing time $p_{j}\in\N$,
a release time $r_{j}\in\N$, and a weight $w_{j}\in\N$. We seek
to compute a (possibly) preemptive schedule for the jobs in $J$ on a single
machine, which respects the release times, i.e., each job $j\in J$
is not processed before its release time $r_{j}$. For each job $j\in J$
we denote by $C_{j}$ the completion time of $j$ in the computed schedule,
and by $F_{j}:=C_{j}-r_{j}$ its flow time. Our objective is to minimize
$\sum_{j\in J}w_{j}F_{j}$.

It was a longstanding open problem to find a polynomial time
$O(1)$-approximation algorithm for this problem. In a break-through
result, Batra, Garg, and Kumar found a $O(1)$-approximation algorithm
in pseudo-polynomial time~\cite{Batra0K18}. Their approach is based
on a reduction of the problem to the \textsc{Demand Multicut Problem
on Trees }in which we are given a tree whose edges have capacities
and costs, and paths with demands, and we seek to satisfy the demand
of each path by selecting edges from the tree in the cheapest possible
way. The reduction to this problem loses a factor of 32, and it is
solved approximately with a dynamic program. The resulting approximation
ratio was not explicitly stated in~\cite{Batra0K18}, but it is at
least 10,000~\cite{RohwedderW21}. The algorithm was 
%subsequently
then % avoiding repetition
improved to a $O(1)$-approximation algorithm in \emph{polynomial} time by
Feige, Kulkarni, and Li~\cite{feige2019polynomial}.

Subsequently, Rohwedder and Wiese improved the approximation ratio
to $2+\epsilon$~\cite{RohwedderW21}. They reduced the given instances
of weighted flow time to a geometric covering problem in which instead
of edges of the tree we select axis-parallel rectangles (which induce
a tree-structure), and instead of the paths in the tree we have vertical
downward rays whose demands we need to satisfy (see Figure~\ref{fig:geom}). This
reduction loses only a factor of $1+\epsilon$ and they solve the
resulting instances up to a factor of $2+\epsilon$ with a dynamic program  (DP). %
\begin{comment}
In fact, for the DP it is important that the reduction creates rays,
rather than, e.g., line segments. This happens naturally in the reduction,
however, it is an important part of the result to show that this is
indeed the case. In fact, with similar arguments one can show that
in the previous reduction to \textsc{Demand Multicut~}\cite{Batra0K18}
all paths can be assumed to end in leaf vertices, however, the previous
algorithm in\textsc{~}\cite{Batra0K18} did not use this.
\end{comment}

A question that was left open is whether weighted flow time on a single
machine admits a PTAS, which is particularly intriguing since a QPTAS has been known for more than 20 years~\cite{feige2019polynomial,chekuri2002approximation}.
In this paper we answer it affirmatively.

\subsection{Our Contribution}

We present a polynomial time $(1+\epsilon)$-approximation algorithm
for weighted flow time on a single machine. The problem is strongly
NP-hard~\cite{brucker1975complexity} and hence our approximation
ratio is best possible, unless $\mathsf{P=NP}$. We use the same reduction
as in~\cite{RohwedderW21} to a geometric covering problem. However,
we solve the resulting instance of the latter problem \emph{exactly
}in pseudo-polynomial time and hence we obtain an approximation ratio
of $1+\epsilon$ overall. Via the reduction in~\cite{feige2019polynomial}
we finally obtain an $(1+\epsilon)$-approximation in polynomial time.

The algorithm in~\cite{RohwedderW21} uses a DP in order to compute
a $(2+\epsilon)$-approximation for a given instance of the mentioned
geometric covering problem. This DP heavily uses the mentioned tree-structure
of the rectangles, but it does not use any special properties of the
rays. However, these rays and their demands are highly structured.
For each interval $[s,t]$ with $s,t\in\N$ there is a ray $L[s,t]$
in the reduced instance. This ray models that some jobs released during
$[s,t]$ need to finish after $t$, since our machine can process at
most $t-s$ units of work during $[s,t]$. In fact, the demand of
$L[s,t]$ is exactly 
%the total processing time of such jobs that needs to complete after $t$.
the total processing time of such jobs minus $t - s$.
This induces some structural properties. For
example, for two rays $L[s,t]$, $L[s',t]$ with $s'<s$ their demands
are related: they differ by the total processing time of jobs released
during $[s,s')$ minus the term $s'-s$ (and thus this difference
is independent of $t$ and the jobs released during $[s,t]$). 

The DP in~\cite{RohwedderW21} can be thought of as a recursive algorithm
that traverses the tree-like structure of the rectangles from the
root to the leaves and selects rectangles in this order. In order
to ensure that the demand of each ray is satisfied, once we arrive
in a leaf, we would like to ``remember'' the previously selected
rectangles on the path from the leaf to the root (i.e., by including
this information in the parameters that describe each DP-cell). We
cannot afford to remember this information exactly, since then we
would have too many DP-cells. The DP in~\cite{RohwedderW21} uses
complicated techniques such as smoothing and ``forgetting'' some
rectangles (similar to~\cite{Batra0K18}) in order to bound the number
of DP-cells by a polynomial, while still remembering enough information.
Instead, we take advantage of the rays' structure and show that we
need to remember only $O_{\epsilon}(1)$ numbers (the value  $\epsilon$ comes from the reduction), which intuitively
correspond to the maximum deficits up to now from rays that intersect
with the current rectangle/node but which start further up in the
plane. In particular, in this way we do not even lose a factor of
$1+\epsilon$ in the computation of the DP but we compute an optimal
solution for this geometric covering problem in pseudo-polynomial
time. At the same time, our algorithm is
much more compact and arguably simpler %and more elegant 
than the DP in~\cite{RohwedderW21}.
%arguably much simpler and more elegant than the DP in~\cite{RohwedderW21}. 

\subsection{Other related work}

Prior to the mentioned first constant factor-approximation algorithm
in pseudo-polynomial time~\cite{Batra0K18}, Bansal and Pruhs~\cite{DBLP:journals/siamcomp/BansalP14}
found a polynomial time $O(\log\log P)$-approximation algorithm for
the general scheduling~(GSP) problem where $P$ denotes the ratio
between the largest and the smallest processing time of jobs in the
instance. GSP subsumes weighted flow time and other scheduling objectives
on a single machine. Also, for the special case of weighted flow time
where $w_{j}=1/p_{j}$ for each job $j\in J$, i.e., we minimize the
average stretch of the jobs, there is a PTAS~\cite{chekuri2002approximation,DBLP:journals/scheduling/BenderMR04}, and also for weighted flow time in the case that $P=O(1)$~\cite{chekuri2002approximation}. 

Weighted flow time is also well studied in the online setting in which
jobs become known only at their respective release times. For this
case, there is a $O(\min(\log W,\log P,\log D))$-competitive algorithm
by Azar and Touitou~\cite{azar2018improved}, where $W$ is the ratio
between the largest and the smallest weight of the jobs in the instance,
and $D$ is the ratios of the largest and smallest job densities,
being defined as $w_{j}/p_{j}$ for each job $j\in J$. This improves
and unifies several prior results, like a $O(\log W)$-competitive
algorithm and a semi-online $O(\log nP)$-competitive algorithm by
Bansal and Dhamdhere~\cite{bansal2007minimizing} and a $O(\log^{2}P)$-competitive
algorithm by Chekuri, Khanna, and Zhu~\cite{chekuri2001algorithms}.
However, there can be no online $O(1)$-competitive algorithm, as
shown by Bansal and Chan~\cite{bansal2009weighted}. On the other
hand, this is possible if the online algorithm can process jobs at a slightly faster rate (by a factor $1+\epsilon$) than the optimal solution, as Bansal and Pruhs show~\cite{DBLP:conf/stoc/BansalP03,DBLP:conf/latin/BansalP04}.

There are also FPT-$(1+\epsilon)$-approximation algorithms known
for weighted flow time on one or several machines, where the parameters
are the number of machines, an upper bound on the (integral) jobs'
processing times and weights, and $\epsilon$~\cite{wiese:LIPIcs:2018:9432}.

\section{Geometric Problem}
In~\cite{RohwedderW21}, Rohwedder and Wiese reduce the weighted
flow time problem to a two-dimensional geometric covering problem.
This reduction is similar to a reduction by Batra, Garg, and Kumar~\cite{Batra0K18}
who reduce weighted flow time to the demand multicut problem on trees.
The main difference is that the former reduction is technically more
involved, but loses only a factor of $(1+\epsilon)$ in the approximation
ratio. 
In this section,
we summarize the key properties of the reduction in~\cite{RohwedderW21}.

In order to avoid technical
difficulties later, before we describe the problem arising from the reduction, we show that we can assume that the release
times of the jobs are pairwise distinct, losing only a factor of $1+\epsilon$. In the following we write $P:=\max_{j\in J}p_{j}$.
We note that in the introduction (and some related literature) $P$ is the ratio of $\max_{j\in J}p_{j} / \min_{j\in J}p_{j}$. We justify this redefinition by the fact that 
%whenever $P$ is the ratio, 
one can transform the instance to an instance in which $p_j \in \{1,2,\dotsc,O(P)\}$ for each job $j\in J$ at a loss of factor $1 + \epsilon$ (see e.g.~\cite{RohwedderW21}), 
and then the two definitions are essentially the same.
%so the two definitions are essentially interchangeable.}
\begin{lem} \label{lem:release-times}
By losing a factor of $1+\epsilon$ in the approximation ratio and
increasing $P$ by a factor of $n/\epsilon$, we can assume that $r_{j}\ne r_{j'}$
for any two distinct jobs $j,j'\in J$. 
\end{lem}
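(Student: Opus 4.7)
The plan is to first rescale time so that every processing time becomes large relative to $n$, then break ties among release times by tiny integer shifts, and argue that both directions of this reduction lose at most an $\epsilon$-fraction of $\OPT$.

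First I would set $M:=\lceil n/\epsilon\rceil$ and replace every $r_{j}$ and $p_{j}$ by $Mr_{j}$ and $Mp_{j}$ respectively. Schedules of the rescaled instance correspond bijectively to schedules of the original by rescaling the time axis, and the two objective values differ by exactly the factor $M$, so approximations carry over; also $P$ grows by the factor $M=n/\epsilon$, as claimed. The crucial consequence is that every processing time is now at least $M$, so each flow time satisfies $F_{j}\ge p_{j}\ge M$, which yields the lower bound $\OPT\ge M\sum_{j}w_{j}$.

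Next I would resolve ties: within each maximal group of jobs sharing a common (rescaled) release time $r$, order them arbitrarily as $j_{1},\ldots,j_{k}$ and reset their release times to $r,r+1,\ldots,r+(k-1)$. Since $k\le n$, the release time of each job $j$ is shifted by some amount $\Delta_{j}\in\{0,1,\ldots,n-1\}$, which is strictly less than $M$. Call the rescaled instance $I^{\mathrm{scl}}$ and its perturbed version $I^{\mathrm{pert}}$. To upper bound $\OPT(I^{\mathrm{pert}})$, take an optimal schedule for $I^{\mathrm{scl}}$ and postpone every processing step by $n-1$ time units: feasibility in $I^{\mathrm{pert}}$ holds because each processing step now begins at time $\ge r_{j}+(n-1)\ge r_{j}+\Delta_{j}$, while every flow time grows by at most $(n-1)-\Delta_{j}\le n-1$, hence $\OPT(I^{\mathrm{pert}})\le \OPT(I^{\mathrm{scl}})+(n-1)\sum_{j}w_{j}\le (1+\epsilon)\OPT(I^{\mathrm{scl}})$.

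In the converse direction, any schedule $\sigma$ that is feasible for $I^{\mathrm{pert}}$ is automatically feasible for $I^{\mathrm{scl}}$ (release times only got earlier), and its cost in $I^{\mathrm{scl}}$ exceeds its cost in $I^{\mathrm{pert}}$ by exactly $\sum_{j}w_{j}\Delta_{j}\le \epsilon\,\OPT(I^{\mathrm{scl}})$. Chaining these two inequalities, a $(1+\epsilon)$-approximate solution to $I^{\mathrm{pert}}$ becomes a $(1+O(\epsilon))$-approximate solution to $I^{\mathrm{scl}}$, and hence, after unscaling, to the original instance; re-parameterising $\epsilon$ gives the lemma. The only delicate point is that both directions of the reduction must each lose at most an $\epsilon$-fraction of $\OPT$, and this is precisely what the bound $\OPT\ge M\sum_{j}w_{j}$ from the scaling step buys us—which is why the rescaling factor is chosen to be $M=\lceil n/\epsilon\rceil$.
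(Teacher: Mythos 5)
Your proposal is correct and is essentially the paper's argument with the two steps swapped: the paper first perturbs release times by the fractional amounts $j\cdot\epsilon/n$ and then scales everything by $n/\epsilon$ to restore integrality, while you scale first and then apply integer shifts; in both cases the loss in each direction is bounded by $\sum_j w_j\cdot(\text{shift}) \le \epsilon\,\OPT$ using $F_j \ge p_j$. No gaps.
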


\begin{proof}
Assume that $J=\{1,...,n\}$. For each job $j\in J$ we define a new
release time $r'_{j}:=r_{j}+j\cdot\epsilon/n$. For these changed
release times there is a schedule of cost at most $\OPT+\sum_{j\in J}\epsilon w_{j}\le\OPT+\sum_{j\in J}\epsilon w_{j}p_{j}\le(1+\epsilon)\OPT$
which we can obtain by taking $\OPT$ and shifting it by $\epsilon$
units to the right, i.e., delaying every (partical) execution of a
job by $\epsilon$ time units. On the other hand, any schedule $S$
for the new release times $\left\{ r'_{j}\right\} _{j\in J}$ yields
a schedule for the original release dates $\left\{ r_{j}\right\} _{j\in J}$
of cost at most $S+\sum_{j\in J}\epsilon w_{j}\le S+\epsilon\OPT\le(1+\epsilon)S$
(where we write $S$ for the cost of $S$). Finally, we scale up all
processing times $\left\{ p_{j}\right\} _{j\in J}$ and release times
$\left\{ r'_{j}\right\} _{j\in J}$ by a factor $n/\epsilon$ in order
to ensure that they are integral again.
\end{proof}
We define $T:=\max_{j}r_{j}+\sum_{j}p_{j}$ and note that in the optimal
solution clearly each job finishes by time $T$ or earlier.  Also, we can assume w.l.o.g.~that $T\le nP$
by splitting the given instance if necessary.

%It is based on the following integer linear programming formulation
%of minimizing weighted flow time on a single machine. It was shown
%in~\cite{Batra0K18} that any solution to it yields a solution to
%the given instance of weighted flow time and vice versa. \awr{Do we really need to state the IP here? }

%\begin{align}
%\min\sum_{j\in J} & \sum_{t>r_{j}}w_{j}x_{j,t}\nonumber \\
%\sum_{j:s\le r_{j}\le t}p_{j}x_{j,t} & \ge\sum_{j:s\le r_{j}\le t}p_{j}-(t-s) & \forall s\le t\le T\label{eq:interval}\\
%x_{j,t} & \ge x_{j,t+1} & \forall j\in J\nonumber \\
%x_{j,t} & \in\{0,1\}.\nonumber 
%\end{align}

\begin{figure}
\centering \begin{tikzpicture}[scale=1.1]
  \def\dy{0.5/3}
  
  \input{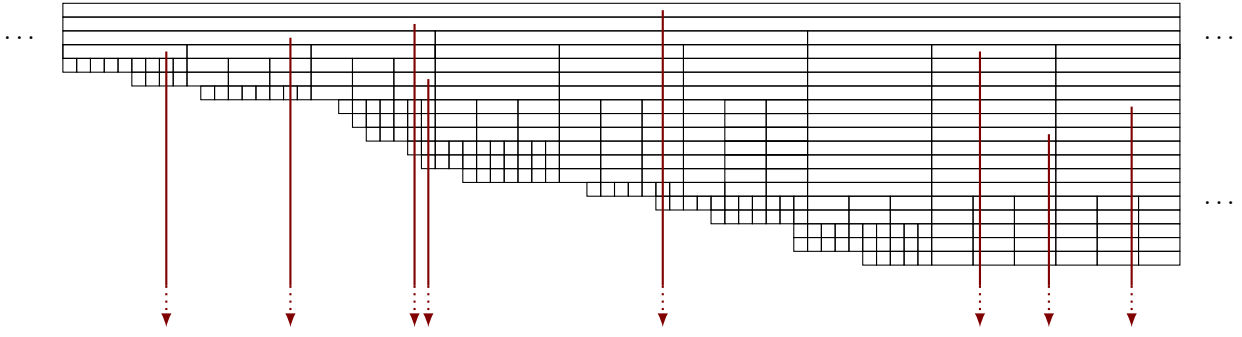}

  \draw[red!50!black, thick] (1.25, \y - 0.25) -- (1.25, 10 - 2.5 * \dy);
  \draw[-latex, red!50!black, dotted, thick] (1.25, \y - 0.25) -- (1.25, \y - 0.75);

  \draw[red!50!black, thick] (2.75, \y - 0.25) -- (2.75, 10 - 1.5 * \dy);
  \draw[-latex, red!50!black, dotted, thick] (2.75, \y - 0.25) -- (2.75, \y - 0.75);

  \draw[red!50!black, thick] (4.25, \y - 0.25) -- (4.25, 10 - 0.5 * \dy);
  \draw[-latex, red!50!black, dotted, thick] (4.25, \y - 0.25) -- (4.25, \y - 0.75);

  \draw[red!50!black, thick] (4.25 + 0.5/3, \y - 0.25) -- (4.25 + 0.5/3, 10 - 4.5 * \dy);
  \draw[-latex, red!50!black, dotted, thick] (4.25 + 0.5/3, \y - 0.25) -- (4.25 + 0.5/3, \y - 0.75);

  \draw[red!50!black, thick] (7.25, \y - 0.25) -- (7.25, 10 + 0.5 * \dy);
  \draw[-latex, red!50!black, dotted, thick] (7.25, \y - 0.25) -- (7.25, \y - 0.75);

  \draw[red!50!black, thick] (11.25 - 0.5/3, \y - 0.25) -- (11.25 - 0.5/3, 10 - 2.5 * \dy);
  \draw[-latex, red!50!black, dotted, thick] (11.25 - 0.5/3, \y - 0.25) -- (11.25 - 0.5/3, \y - 0.75);

  \draw[red!50!black, thick] (11.75 + 0.5/3, \y - 0.25) -- (11.75 + 0.5/3, 10 - 8.5 * \dy);
  \draw[-latex, red!50!black, dotted, thick] (11.75 + 0.5/3, \y - 0.25) -- (11.75 + 0.5/3, \y - 0.75);

  \draw[red!50!black, thick] (12.75 + 0.5/3, \y - 0.25) -- (12.75 + 0.5/3, 10 - 6.5 * \dy);
  \draw[-latex, red!50!black, dotted, thick] (12.75 + 0.5/3, \y - 0.25) -- (12.75 + 0.5/3, \y - 0.75);

%  \draw[ultra thick, red!50!black] (0, 8.8) rectangle (18, \y - 1);
%
%  \draw[ultra thick, red!50!black] (0, 8) rectangle (6, \y - 1);
%  \draw[ultra thick, red!50!black] (0, 8) rectangle (2, \y - 1);
%  \draw[ultra thick, red!50!black] (2, 7.2) rectangle (4, \y - 1);
%  \draw[ultra thick, red!50!black] (4, 6) rectangle (6, \y - 1);
%
%  \draw[ultra thick, red!50!black] (6, 6) rectangle (12, \y - 1);
%  \draw[ultra thick, red!50!black] (6, 5.2) rectangle (8, \y - 1);
%  \draw[ultra thick, red!50!black] (8, 4.8) rectangle (10, \y - 1);
%  \draw[ultra thick, red!50!black] (10, 4.4) rectangle (12, \y - 1);
%
%  \draw[ultra thick, red!50!black] (12, 4.4) rectangle (18, \y - 1);
%  \draw[ultra thick, red!50!black] (12, 3.2) rectangle (14, \y - 1);
%  \draw[ultra thick, red!50!black] (14, 3.2) rectangle (16, \y - 1);
%  \draw[ultra thick, red!50!black] (16, 3.2) rectangle (18, \y - 1);

%  \draw[ultra thick, blue] (0, 8) -- (0, \y - 1.5);

%  \draw[thick, dashed, blue] (1.5, \y - 1.5) -- (1.5, 7.2);

%  \draw[thick, dashed, blue] (3, \y - 1.5) -- (3, 6.8);

%  \draw[thick, blue] (4.5, 5.6) -- (4.5, \y - 1.5);

%  \draw[thick, dashed, blue] (6, \y - 1.5) -- (6, 5.2);

%  \draw[thick, dashed, blue] (7.5, \y - 1.5) -- (7.5, 4.4);

%  \draw[thick, blue] (9, 3.2) -- (9, \y - 1.5);

%  \draw[thick, dashed, blue] (10.5, \y - 1.5) -- (10.5, 2.8);

%  \draw[thick, dashed, blue] (12, \y - 1.5) -- (12, 2.8);

%  \draw[ultra thick, blue] (13.5, 2.8) -- (13.5, \y - 1.5);

\end{tikzpicture} \caption{Example of the alignment of rectangles and rays.}
\label{fig:geom} 
\end{figure}
In the reduction, we introduce for each each job $j$ a set of unit
height rectangles, and each rectangle $R$ corresponding to $j$ has
a cost $c(R)$ and a capacitiy $p(R)$. All introduced rectangles
are pairwise non-overlapping and all of their vertices have integral
coordinates, see Figure~\ref{fig:geom}. Also, they form a tree-like structure,
which we will make precise below.
For each interval of time $I = [s, t]$ we introduce a vertical (downward-pointing) ray $L(I)$ with
a demand $d(I)$. 
%These demands arise from the need to bound the total
%processing done during $I$.
The goal of the geometric covering problem is to select a subset of the rectangles
of minimum cost such that the demands of all rays are covered. The
demand of a ray is covered if the total capacity of selected rectangles
intersecting with the ray is at least the demand of the ray. There
are additional technical local constraints that restrict what combinations
of rectangles can be selected. 

The rectangles and rays created by the reduction have special structural
properties that will later allow us to solve the resulting instances
optimally in pseudo-polynomial time. We will describe now these properties,
as well as the mentioned local constraints. For the complete description
of the reduction we refer to~\cite{RohwedderW21} 

\paragraph*{Hierachical grid.}
We subdivide the $x$-axis into hierachical levels of grid cells. This grid is parameterized by a constant $K = (2/\epsilon)^{1/\epsilon}$ (assuming $1/\epsilon \in \mathbb N$).
Each
cell $C$ correspond to an interval $[t_{1},t_{2}]$ with $t_{1},t_{2}\in\N$
and we define $\mathrm{beg}(C)=t_{1}$, $\mathrm{end}(C)=t_{2}$ and
$\mathrm{len}(C)=t_{2}-t_{1}$. At level $0$ there is only one cell
$C_{0}$, which contains $[0,T)$.
The precise coordinates are subject to random shifts and we do not formalize them here.
The construction ensures that $\mathrm{len}(C_{0})\le K^2 nP = O_{\epsilon}(nP)$.
The cells are organized in a tree such that each cell $C$ of some
level $\ell$ with $\len(C)>K$
has exactly $K$ children of level $\ell+1$, and these children form
a subdivision of $[\mathrm{beg}(C),\mathrm{end}(C)]$ into $K$ subintervals
of length $\mathrm{len}(C)/K$ each. A cell $C$ with $\mathrm{len}(C)\in\{1,2,\dotsc,K\}$
does not have any children. For a cell $C$ we denote its level
by $\ell(C)$. Let $\ell_{\max}$ denote the maximum level and let
$\C$ denote the set of all cells.

\paragraph*{Segments.}

Each job $j$ is associated with segments $\mathrm{Seg}(j)$ that
partition the interval $[r_{j},\mathrm{end}(C_{0})]$ (recall that
$C_{0}$ is the unique cell at level $0$). More precisely, there
is a set of cells that cover $[r_{j},\mathrm{end}(C_{0})]$, one
cell from each level $\ell\in\{0,...,\ell_{\max}\}$. Then for each
of these cells $C$ we define a set of segments $\mathrm{Seg}(j,C)$.
Each segment $S\in\mathrm{Seg}(j,C)$ is contained in $C$. Also,
if $\ell(C)\le\ell_{\max}-2$ then $S$ is a cell of level $\ell(C)+2$,
and if $\ell(C)>\ell_{\max}-2$ then $S$ is an interval $[t,t+1]$
for some $t\in\N$. In particular, the segments in $\mathrm{Seg}(j,C)$
all have the same length
and the number of segments in $\mathrm{Seg}(j,C)$ is an integral multiple of $K$ in $\{K,2K,...,K^2\}$. Furthermore, their union forms an interval
ending in $\mathrm{end}(C)$. In the case that $\ell(C) \le \ell_{\max} - 1$ this interval starts at $\mathrm{beg}(C')$ for a child $C'$ of $C$.
The set $\mathrm{Seg}(j)$ is 
the union of all segments in the sets $\mathrm{Seg}(j,C)$ for all cells $C\in \C$ and $\mathrm{Seg}(j)$ has the property
that the segment lengths are non-decreasing from left to right.

The construction of the segments guarantees the following relationship
between segments of different jobs:
Let $j,j'$ be jobs with $r_{j}\le r_{j'}$ and consider a group of segments
$\mathrm{Seg}(j', C')$. Then there exists a cell $C$ such that the interval spanned by $\mathrm{Seg}(j', C')$ is contained in the
interval spanned by $\mathrm{Seg}(j, C)$, i.e., $\bigcup_{S'\in \mathrm{Seg}(j', C')}S' \ \subseteq \bigcup_{S\in \mathrm{Seg}(j, C)}S$, and in addition $C'=C$ or $C'$ is a descendant of $C$; see also Figure~\ref{fig:segments}.

\begin{figure}
    \centering
    \begin{tikzpicture}[scale = 1.1]
  \node at (3.75, -0.4) {$\mathrm{Seg}(j)$};
  \node at (5, 0) {$r_j$};
  \draw[{|[width=4mm]}- ] (5,  -0.4) -- (5.1,  -0.4);
  \draw[|- ] (5.1,  -0.4) -- (5.2,  -0.4);
  \draw[|- ] (5.2,  -0.4) -- (5.3,  -0.4);
  \draw[{|[width=4mm]}- ] (5.3,  -0.4) -- (5.4,  -0.4);
  \draw[|- ] (5.4,  -0.4) -- (5.5,  -0.4);
  \draw[|- ] (5.5,  -0.4) -- (5.6,  -0.4);
  \draw[{|[width=4mm]}- ] (5.6,  -0.4) -- (5.9,  -0.4);
  \draw[|- ][|- ] (5.9,  -0.4) -- (6.2,  -0.4);
  \draw[|- ] (6.2,  -0.4) -- (6.5,  -0.4);
  \draw[|- ] (6.5,  -0.4) -- (6.8,  -0.4);
  \draw[|- ] (6.8,  -0.4) -- (7.1,  -0.4);
  \draw[|- ] (7.1,  -0.4) -- (7.4,  -0.4);
  \draw[|- ] (7.4,  -0.4) -- (7.7,  -0.4);
  \draw[|- ] (7.7,  -0.4) -- (8.0,  -0.4);
  \draw[|- ] (8.0,  -0.4) -- (8.3,  -0.4);
  \draw[{|[width=4mm]}- ] (8.3,  -0.4) -- (9.2,  -0.4);
  \draw[|- ] (9.2,  -0.4) -- (10.1,  -0.4);
  \draw[|- ] (10.1,  -0.4) -- (11.0,  -0.4);
  \draw[|- ] (11.0,  -0.4) -- (11.9,  -0.4);
  \draw[|- ] (11.9,  -0.4) -- (12.8,  -0.4);
  \draw[|-{|[width=4mm]} ] (12.8,  -0.4) -- (13.7,  -0.4);
  \draw[- ] (13.7,  -0.4) -- (14,  -0.4);

  \node at (3.75, -1.4) {$\mathrm{Seg}(j')$};
  \node at (6.5, -1) {$r_{j'}$};
  \draw[{|[width=4mm]}- ] (6.5,  -1.4) -- (6.6,  -1.4);
  \draw[|- ] (6.6,  -1.4) -- (6.7,  -1.4);
  \draw[|- ] (6.7,  -1.4) -- (6.8,  -1.4);
  \draw[{|[width=4mm]}- ] (6.8,  -1.4) -- (6.9,  -1.4);
  \draw[|- ] (6.9,  -1.4) -- (7.0,  -1.4);
  \draw[|- ] (7.0,  -1.4) -- (7.1,  -1.4);
  \draw[|- ] (7.1,  -1.4) -- (7.2,  -1.4);
  \draw[|- ] (7.2,  -1.4) -- (7.3, -1.4);
  \draw[|- ] (7.3,  -1.4) -- (7.4, -1.4);
  \draw[{|[width=4mm]}- ] (7.4, -1.4) -- (7.7, -1.4);
  \draw[|- ] (7.7, -1.4) -- (8.0, -1.4);
  \draw[|- ] (8.0, -1.4) -- (8.3, -1.4);
  \draw[{|[width=4mm]}- ] (8.3, -1.4) -- (9.2, -1.4);
  \draw[|- ] (9.2, -1.4) -- (10.1, -1.4);
  \draw[|- ] (10.1, -1.4) -- (11.0, -1.4);
  \draw[|- ] (11.0, -1.4) -- (11.9, -1.4);
  \draw[|- ] (11.9, -1.4) -- (12.8, -1.4);
  \draw[|-{|[width=4mm]} ] (12.8, -1.4) -- (13.7, -1.4);
  \draw[- ] (13.7,  -1.4) -- (14,  -1.4);

  \node at (3.75, -2.4) {Grid};
  \draw[- , double] (4.4, -2.4) -- (4.7, -2.4);
  \draw[{|[width=6mm]}- , double] (4.7, -2.4) -- (5, -2.4);
  \draw[|- , double] (5, -2.4) -- (5.3, -2.4);
  \draw[|- , double] (5.3, -2.4) -- (5.6, -2.4);
  \draw[{|[width=15mm]}- , double] (5.6, -2.4) -- (5.9, -2.4);
  \draw[|- , double] (5.9, -2.4) -- (6.2, -2.4);
  \draw[|- , double] (6.2, -2.4) -- (6.5, -2.4);
  \draw[{|[width=6mm]}- , double] (6.5, -2.4) -- (6.8, -2.4);
  \draw[|- , double] (6.8, -2.4) -- (7.1, -2.4);
  \draw[|- , double] (7.1, -2.4) -- (7.4, -2.4);
  \draw[{|[width=6mm]}- , double] (7.4, -2.4) -- (7.7, -2.4);
  \draw[|- , double] (7.7, -2.4) -- (8, -2.4);
  \draw[|- , double] (8, -2.4) -- (8.3, -2.4);
  \draw[{|[width=10mm]}- , double] (8.3, -2.4) -- (8.6, -2.4);
  \draw[|- , double] (8.6, -2.4) -- (8.9, -2.4);
  \draw[|- , double] (8.9, -2.4) -- (9.2, -2.4);
  \draw[{|[width=6mm]}- , double] (9.2, -2.4) -- (9.5, -2.4);
  \draw[|- , double] (9.5, -2.4) -- (9.8, -2.4);
  \draw[|- , double] (9.8, -2.4) -- (10.1, -2.4);
  \draw[{|[width=6mm]}- , double] (10.1, -2.4) -- (10.4, -2.4);
  \draw[|- , double] (10.4, -2.4) -- (10.7, -2.4);
  \draw[|- , double] (10.7, -2.4) -- (11, -2.4);
  \draw[{|[width=10mm]}- , double] (11, -2.4) -- (11.3, -2.4);
  \draw[|- , double] (11.3, -2.4) -- (11.6, -2.4);
  \draw[|- , double] (11.6, -2.4) -- (11.9, -2.4);
  \draw[{|[width=6mm]}- , double] (11.9, -2.4) -- (12.2, -2.4);
  \draw[|- , double] (12.2, -2.4) -- (12.5, -2.4);
  \draw[|- , double] (12.5, -2.4) -- (12.8, -2.4);
  \draw[{|[width=6mm]}- , double] (12.8, -2.4) -- (13.1, -2.4);
  \draw[|- , double] (13.1, -2.4) -- (13.4, -2.4);
  \draw[|- , double] (13.4, -2.4) -- (13.7, -2.4);
  \draw[{|[width=15mm]}- , double] (13.7, -2.4) -- (14.1, -2.4);
  \draw[decorate, decoration={brace, amplitude=3mm}] (13.7, -3.25) to node[below, pos=0.5, yshift=-10pt] {Cell $C$} (5.6, -3.25);
  \draw[decorate, decoration={brace, amplitude=3mm}] (8.3, 0) to node[above, pos=0.5, yshift=10pt] {$\mathrm{Seg}(j, C)$} (13.7, 0);
\end{tikzpicture}
    \caption{Alignment of segments. Small separators indicate end of a segment; big separators indicate end of a group of segments (e.g., $\mathrm{Seg}(j, C)$).
    Highlighted in this example are a particular cell $C$ and the segments $\mathrm{Seg}(j, C)$ corresponding to $j$.}
    \label{fig:segments}
\end{figure}
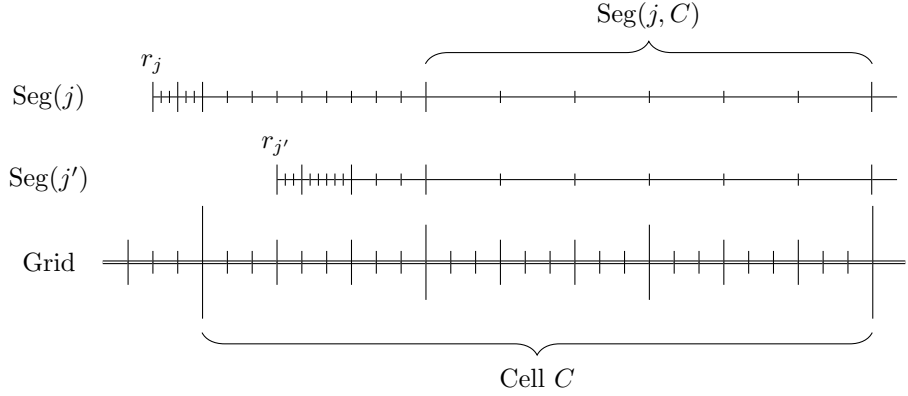

\paragraph*{Rectangles.}
We are now ready to describe the rectangles. First, assume that the
jobs correspond to integers $1,2,\dotsc,n$ and they are sorted non-decreasingly
by their release times. Consider a job $j$. For each segment $S\in\mathrm{Seg}(j)$
there is a rectangle $[\mathrm{beg}(S),\mathrm{end}(S))\times[j,j+1)$,
see Figure~\ref{fig:geom} (in our figures we assume that the origin is in the top left corner and that the $x$- and $y$-coordinates increase when going to the right and down, respectively). For each cell $C\in \C$ we define $\mathcal{R}(j,C)$ as the set of all
rectangles derived from segments in $\mathrm{Seg}(j,C)$. Also, we define $\mathcal{R}(j):=\bigcup_{C\in\C}\mathcal{R}(j,C)$
and $\mathcal{R}:=\bigcup_{j\in J}\mathcal{R}(j)$. Each rectangle
$R\in\mathcal{R}(j)$ has a cost $c(R)\in\N$ and a capacity
$p(R)$ with $p(R)=p_{j}$.

\paragraph*{Local selection constraint.}

As mentioned above, our goal is to select a subset of the rectangles
in $\R$. For this, we require a local selection constraint for the
set $\mathcal{R}(j,C)$ for each job $j\in J$ and each cell $C\in\C$
with $\mathcal{R}(j,C)\ne\emptyset$. We require that the solution
selects a prefix of the rectangles $\mathcal{R}(j,C)$, i.e., we require
that the union of the selected rectangles in $\mathcal{R}(j,C)$ forms
a rectangle that includes the leftmost rectangle in $\mathcal{R}(j,C)$.
Each such constraint is local in the sense that it affects only one
set $\mathcal{R}(j,C)$.%

\paragraph*{Rays.}
For each interval $I=[s,t]$ with $s, t \in \N$ and $0 \le s\le t \leq T$ we introduce a ray $L(I)$.
Let $j(I)$ be the job $j$ with minimum $r_{j}$ such that $s\le r_{j}$.
We define $L(I):=\{t+\frac{1}{2}\}\times[j(I)+\frac{1}{2},\infty)$,
see Figure~\ref{fig:geom} and define the demand of $L(I)$ as $d(I):=\sum_{j:s\le r_{j}\le t}p_{j}-(t-s)$. Also, denote by $\R(I)$ the rectangles that are intersected by $L(I)$.
Let $\L$ denote the set of all introduced rays.

The goal of our geometric covering problem is to select a subset $\R'\subseteq \R$ of the rectangles in $\R$ such that for each ray $L(I)\in \L$ its demand is satisfied by $\R'$, i.e., $p(\R' \cap \R(I)) \ge d(L(I))$.
In~\cite{RohwedderW21} it was shown that any pseudo-polynomial time
algorithm for this problem yields
a pseudo-polynomial time algorithm for weighted flow time, losing
only a factor of $1+\epsilon$ in the approximation ratio. Moreover,
Feige, Kulkarni, and Li~\cite{feige2019polynomial} showed that any
pseudo-polynomial time algorithm for weighted flow time can be transformed
to a polynomial time algorithm, again by losing only a factor $1+\epsilon$.
These reductions yield the following lemma. 

\begin{comment}
to inputs with polynomially bounded numbers and the pseudopolynomial
reduction by Rohwedder and Wiese~\cite{RohwedderW21} to the geometric
covering problem, it suffices to solve the latter problem. 
\end{comment}
\begin{lemma}[\cite{feige2019polynomial}, \cite{RohwedderW21}]\label{lem:reduction}
Given a pseudo-polynomial time $c$-approximation algorithm for the
geometric covering problem defined above (as a black-box), we can
construct a polynomial time $c(1+\epsilon)$-approximation algorithm
for preemptive weighted flow time on a single machine for any constant $\epsilon>0$.
\end{lemma}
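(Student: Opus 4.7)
The plan is to obtain Lemma~\ref{lem:reduction} as a straightforward composition of two results from prior work, both of which are used purely as black boxes. The heavy lifting—namely the actual construction of the rectangles, rays, demands, and local constraints described in Section~2, and the polynomial-time bootstrapping procedure—is carried out in~\cite{RohwedderW21} and~\cite{feige2019polynomial} respectively, and the only thing to be done here is to keep track of the approximation factors and running times along the chain of reductions.

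Concretely, I would first replace the target error $\epsilon$ by a sufficiently small $\epsilon' = \Theta(\epsilon)$ so that $(1+\epsilon')^{3} \le 1+\epsilon$, and apply Lemma~\ref{lem:release-times} with parameter $\epsilon'$ to assume distinct release times at the cost of a factor $1+\epsilon'$ and an increase of $P$ by a factor $n/\epsilon'$. Next, I would invoke the geometric reduction of Rohwedder and Wiese (whose output is summarized in Section~2): it builds, in time polynomial in $n$ and $P$, an instance of the geometric covering problem with the hierarchical-grid/segment/rectangle/ray structure described above, such that any feasible solution to the covering instance yields a schedule whose cost is within a factor $1+\epsilon'$ of the covering cost, and conversely $\mathrm{OPT}_{\mathrm{cover}} \le (1+\epsilon') \mathrm{OPT}_{\mathrm{flow}}$. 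Running the assumed pseudo-polynomial-time $c$-approximation for the covering problem on this instance therefore yields, in time polynomial in $n$ and $P$, a preemptive schedule of cost at most $c(1+\epsilon')^{2}\,\mathrm{OPT}_{\mathrm{flow}}$, i.e., a pseudo-polynomial $c(1+\epsilon')^{2}$-approximation for weighted flow time on a single machine.

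Finally, I would feed this pseudo-polynomial approximation into the black-box transformation of Feige, Kulkarni, and Li~\cite{feige2019polynomial}, which takes any pseudo-polynomial $\alpha$-approximation algorithm for weighted flow time and returns a genuinely polynomial-time $\alpha(1+\epsilon')$-approximation algorithm for the same problem; this is the key technical input that allows us to remove the dependence on $P$ in the running time. Applying it with $\alpha = c(1+\epsilon')^{2}$ produces a polynomial-time $c(1+\epsilon')^{3}$-approximation, which by our choice of $\epsilon'$ is at most $c(1+\epsilon)$, as required.

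Since both reductions are invoked as black boxes, there is no genuine obstacle inside this proof; the only point that needs a small amount of care is the bookkeeping of $\epsilon$-losses—specifically, ensuring that the rescaling $\epsilon' = \Theta(\epsilon)$ is done \emph{before} applying the two reductions so that the three multiplicative factors $(1+\epsilon')$ that accumulate (from distinct release times, from the geometric reduction, and from the pseudo- to polynomial-time bootstrapping) collapse to a single $(1+\epsilon)$ in the final bound. The hardest conceptual content sits entirely in the cited works, and for the purpose of this paper Lemma~\ref{lem:reduction} serves only to license reducing our attention for the remainder of the paper to designing an exact pseudo-polynomial-time algorithm for the geometric covering problem described in Section~2.
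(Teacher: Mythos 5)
Your proposal is correct and matches the paper's own justification: the paper proves Lemma~\ref{lem:reduction} simply by composing the Rohwedder--Wiese reduction to the geometric covering problem (losing $1+\epsilon$) with the Feige--Kulkarni--Li pseudo-polynomial-to-polynomial transformation (losing another $1+\epsilon$), both used as black boxes, with the constant powers of $(1+\epsilon)$ absorbed by rescaling $\epsilon$. Your additional bookkeeping of the release-time perturbation and the explicit choice of $\epsilon'$ is a harmless elaboration of the same argument.
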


\section{Dynamic Program}

\begin{figure}
	\centering
	\begin{tikzpicture}[scale=1.1]
  \def\dy{0.5/3}

  \fill[lightgray] (4.5, 10 - 7 * \dy) rectangle (9, 10 - 8 * \dy);
  \fill[pattern=north west lines, pattern color=gray] (4.5, 10 - 7 * \dy) rectangle (9, 10 - 21 * \dy);
  \draw[|-|, thick] (4.5, 10 - 23 * \dy) to node[pos=0.5, below] {Cell $C$} (9, 10 - 23 * \dy);

  \input{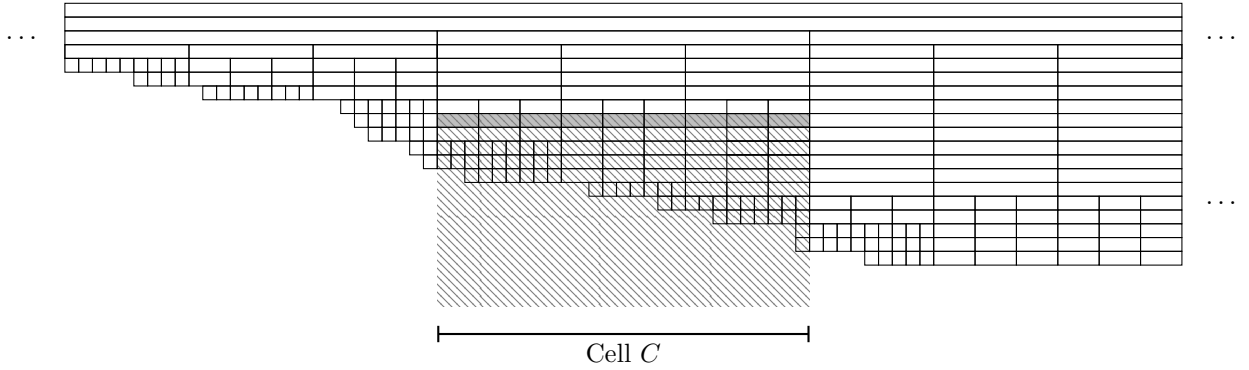}

\end{tikzpicture}
	\caption{Subproblem in the DP. Region $A$ is hatched and rectangles $\mathcal R(j, C)$ depicted in gray.}
	\label{fig:subproblem}
\end{figure}

The slightly over-simplified idea of our dynamic program is the following. We have
an entry in our DP-table (i.e., a DP-cell) for each combination of
a job $j$ and a cell $C$ such that $\mathcal{R}(j,C)\neq\emptyset$.
Let $S_{L}$ and $S_{R}$ denote the leftmost and rightmost segments
in $\mathrm{Seg}(j,C)$ respectively. Our goal is to select rectangles
contained in the region $A=[\beg(S_{L}),\en(S_{R}))\times[j,\infty)$
(see Figure~\ref{fig:subproblem}) such that we satisfy the demand of all rays that
are contained in $A$. In this DP-cell, we want to store the solution
of minimum cost with this property. In order to compute the solution
for this DP-cell, we try out all possibilities for which rectangles
in $\mathcal{R}(j,C)$ to select, obeying the rule that we need to
select a prefix of them. Then we combine this choice with the solutions
to one or more DP-cells for the next job $j+1$ (where the precise
definition of these DP-cells is induced by the tree structure in which
the rectangles are organized).

If we tried to use this approach by itself there would be a fundamental flaw:
suppose for simplicity that 
the problem is reduced to exactly one DP-cell for $(j+1,C)$ which corresponds
to the region $A'=[\beg(S_{L}),\en(S_{R}))\times[j+1,\infty)$; then the rectangles in $\mathcal{R}(j+1,C)$ are directly underneath the
rectangles in $\mathcal{R}(j,C)$, like in Figure~\ref{fig:subproblem}.
There may be rays that are contained in $A$, but not
in $A'$. In order to cover their demands, it is relevant which
rectangles in $A\setminus A'$ we have chosen.
However, so far the idea above does not take 
these rays into account
 when choosing the solution inside $A'$.

Intuitively, one approach to fix this problem is to ``remember''
the remaining demand for each ray that is contained in $A$ but not
in $A'$. In order not to forget it again in the next round of the
DP, for a given DP-cell for a job $j'$ and a cell $C'$, corresponding
to some region $A'$, we would like to remember the remaining demand
for each ray that intersects with $A'$ but that is not contained
in $A'$. By ``remember'' we mean that we include this information
in the parameters of the DP-cell. However, there is a pseudo-polynomial
number of possibilities for this remaining demand and a polynomial
number of rays in total. Thus, there is an exponential number of combinations
for these remaining demands and we cannot afford to have a DP-cell
for each of these combinations.

Instead, we show that 
%there are certain rays of this kind (i.e., intersecting
%$A'$ but not contained in $A'$) whose remaining demands dominate
%the remaining demands of all other rays of this kind, in the sense
%that if we satisfy their demands, then we satisfy the remaining demands
%of all of them. Even more, we show that equivalently 
we can artificially
increase the demands of certain rays \emph{contained in $A'$ }by
at most $K^2$ different amounts, 
such that any solution satisfies 
 these increased demands 
 if and only if it 
   satisfies the remaining
demands of the mentioned rays that are not contained in $A'$. It will suffice to remember
only $K^2=O_{\epsilon}(1)$ values, for which there are only a pseudo-polynomial
number of possibilities.

\subsection{Definition of the DP-table}\label{sec:DP-table}

For each rectangle $R\in\R$ we define by $\mathrm{proj}_{x}(R)$
its projection to the $x$-axis. %In Addition we need the following notation to deal with the special cases arising with the cells of level $\ell_{\max}$ and $\ell_{\max}-1$. Let $C$ be a cell.  $\delta(C)$ is a replacement for the number of children cells of $C$.
%Formally for a cell $C$ let $\delta(C)=K$ if $\ell(C) \leq \ell_{\max}-1$ and $\delta(C)=\en(C)-\beg(C)$ if $\ell(C) = \ell_{\max}$.
%$\delta_2(C)$ is a replacement for the number of children cells of $C'$ for a child $C'$ of $C$. Note that this is independent of the choice of the child $C'$.
%Formally for a cell $C$ let $\delta_2(C)=1$ if $\ell(C) = \ell_{\max}$ and if $C$ has a child $C'$, let $\delta_2(C)=\delta(C')$.
Each entry in our DP-table corresponds to a specific subproblem. Note
that in addition to the description above, we introduce also DP-cells
for certain pairs of a job $j$ and a cell $C$ such that $\mathcal{R}(j,C)=\emptyset$,
for example when there is a descendant $C'$ of $C$ such that $\mathcal{R}(j,C')\ne\emptyset$.
Also, for some of our DP-cells the corresponding area $A$ can be
of the form $A=[b,\en(S_{R}))\times[j,\infty)$ for some value $b<\beg(S_{L})$,
where $S_{L}$ and $S_{R}$ denote the leftmost and rightmost segments
in $\mathrm{Seg}(j,C)$, respectively.
However, $A$ will always be contained in $C \times [j, \infty)$.
To avoid ambiguities, we will always use the term "DP-cell" for the entries of our DP-table and we will use "cell" only for the elements in $\C$.
Formally, we have a DP-cell for each combination of 
\begin{itemize}
\item a job $j\in J\cup\{n+1\}$, 
\item a cell $C\in\C$,
\item a value $\k\in\{1,...,K\}$ which intuitively indicates that we are
only interested in rectangles whose corresponding segments are contained
in the $\k$-th to $K$-th children of $C$; 
\item $j,C,$ and $\k$ induce an area $A(j, C, \k)$ defined as follows
\begin{itemize}
\item if $\ell(C)<\ell_{\max}$ then $A(j, C, \k):=[\beg(C'),\en(C))\times[j,\infty)$
where $C'$ is the $\k$-th child of $C$
\item if $\ell(C)=\ell_{\max}$ then $A(j, C, \k):=[\beg(C)+\k-1,\en(C))\times[j,\infty)$
\end{itemize}
We require that the set $A(j, C, \k)$ is nonempty, otherwise the DP-cell is not defined. Note that this happens if and only if $\ell(C)= \ell_{\max}$ and $\k > \en(C)-\beg(C)$.
\item a value $f_{C'}\in\{0,...,\sum_{j': j'<j}p_{j'}\}$ for each cell $C'\in\C(C,\k)$
such that 
\begin{itemize}
\item if $\ell(C)\le\ell_{\max}-2$ then $\C(C,\k)$ contains all grandchildren
of $C$ that are children of the $\k$-th to $K$-th children of $C$
\item if $\ell(C)=\ell_{\max}-1$ then $\C(C,\k)$ contains each interval
$[t,t+1)$ with $t\in\N$ such that $[t,t+1)$ is contained in the
$\k'$-th child of $C$ for some $\k'\in\{\k,...,K\}$ (one may thing
of $[t,t+1)$ being a cell of some dummy level $\ell_{\max}+1$), 
\item if $\ell(C)=\ell_{\max}$ then $\C(C,\k)$ contains each interval
$[t,t+1)$ with $t\in\N$ and $[t,t+1)\subseteq[\beg(C)+\k,\en(C))$. 
\end{itemize}
\item we require for each job $j'$ and each cell $C'$ that either all
rectangles in $\R(j',C')$ are contained in $A(j, C, \k)$ and that in this
case $C'$ is a descendant of $C$ or $C=C'$, or that none
of the rectangles in $\R(j',C')$ intersect with $A(j, C, \k)$ (otherwise
the DP-cell $(j,C,\k,\{f_{C'}\}_{C'})$ is not defined)
\end{itemize}
Fix such a DP-cell $(j,C,\k,\{f_{C'}\}_{C'})$. We define $\mathcal{R}_{\down}(j,C,\k):=\{R\in\mathcal{R} \mid R\subseteq A(j, C, \k)\}$,
see Figure~\ref{fig:subproblem}. The subproblem corresponding to $(j,C,\k,\{f_{C'}\}_{C'})$
is to select a set of rectangles $\R'\subseteq\R$ such that 
\begin{itemize}
\item $\R'\subseteq\mathcal{R}_{\down}(j,C,\k)$, 
\item for each job $j'\in J$ and each cell $C'\in\C$ the set $\R'\cap\mathcal{R}(j',C')$
forms a prefix of $\mathcal{R}(j',C')$, 
\item for each ray $L(I)\in\L$ for some interval $I$ with $L(I)\subseteq A(j, C, \k)$
its demand $d(I)$ is covered by $\R'$, i.e., $p(\R'\cap\R(I))\ge d(I)$, 
\item for each cell $C'\in\C(C,\k)$ and each ray $L([r_{j},t]) \in \L$ with $t\in C'$
(and for which hence $L(I)\subseteq A(j, C, \k)$ holds) we have even that $p(\R'\cap\R([r_{j},t]))\ge d([r_{j},t])+f_{C'}$;
for the case $j=n+1$ we define $r_{n+1}:=T+1$.
\end{itemize}
The objective is to minimize $c(\R')$. We store in the DP-cell $(j,C,\k,\{f_{C'}\}_{C'})$
the optimal solution to this sub-problem, or the information that there
is no feasible solution for it. In the former case, we denote by $\OPT(j,C,\k,\{f_{C'}\}_{C'})$
the optimal solution to the DP-cell.

\subsection{Filling in the DP-table}
We describe now how to fill in the entries of the DP-table. 
Consider
a DP-cell $(j,C,\k,\{f_{C'}\}_{C'})$ and assume that it has a feasible solution. Let $A:=A(j, C, \k)$. The base case arises when $A$ does not contain any rectangles from $\R$ (this holds for example when $j=n+1$). In this case
we store the empty solution in $(j,C,\k,\{f_{C'}\}_{C'})$. This is
justified due to the following lemma since $A$ can contain only rays $L([s,t])$ with $\R([s,t])=\emptyset$ and the last condition of the definitions of the sub-problem applies to such a ray only if $s=r_j'$ for some job $j'$.
\begin{lem}\label{lem:empty-ray}
		Let  $L([s, t])\in\L$ be a ray with $\R([s, t])= \emptyset$. Then $d([s,t])\leq 0$ and $s$ is not the release time of any job.
	\end{lem}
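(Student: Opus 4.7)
My plan is to argue the contrapositive: I will show that if some job has release time in $[s,t]$, then $\R([s,t])$ is nonempty. Both conclusions of the lemma then follow immediately, since if no $r_j$ lies in $[s,t]$ then the sum $\sum_{j:s\le r_j\le t}p_j$ is zero, giving $d([s,t])=-(t-s)\le 0$, and moreover $s$ itself cannot be a release time (otherwise $s\in[s,t]$ would be one).

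For the contrapositive, suppose a job with release time in $[s,t]$ exists, and let $j^*$ be one with the smallest such release time. By definition of the ray, $j^*=j(I)$, and the ray is $L([s,t])=\{t+\tfrac12\}\times[j^*+\tfrac12,\infty)$. The segments $\mathrm{Seg}(j^*)$ partition the interval $[r_{j^*},\en(C_0))$. Since $r_{j^*}\le t$ and $t\le T<\en(C_0)$ (the root cell $C_0$ properly contains $[0,T)$), the half-integer $t+\tfrac12$ lies in this partitioned interval, so there is a segment $S\in\mathrm{Seg}(j^*)$ with $\beg(S)\le t<\en(S)$, hence $t+\tfrac12\in[\beg(S),\en(S))$. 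The corresponding rectangle $[\beg(S),\en(S))\times[j^*,j^*+1)$ then contains the point $(t+\tfrac12,\,j^*+\tfrac12)$, which is the topmost point of $L([s,t])$. Therefore this rectangle lies in $\R([s,t])$, contradicting the assumption $\R([s,t])=\emptyset$.

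The main (mild) obstacle is the bookkeeping around the boundary of $C_0$ needed to ensure $t+\tfrac12$ actually lies in the $x$-range covered by $\mathrm{Seg}(j^*)$; this is handled by the observation $t\le T<\en(C_0)$ noted above. Everything else is a direct unfolding of the definitions of segments, rectangles, and the ray $L(I)$.
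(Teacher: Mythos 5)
Your proof is correct and follows essentially the same route as the paper's: both reduce everything to showing that a job released in $[s,t]$ would force a rectangle of that job to meet the ray at $x$-coordinate $t+\tfrac12$, via the fact that its segments cover $[r_j,\en(C_0))$. The only (cosmetic) differences are that you argue the contrapositive rather than by contradiction, and you instantiate the job as $j(I)$ itself, which makes the $y$-intersection with the ray immediate.
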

\begin{proof}
	Recall that the demand satisfies $d([s,t])= \sum_{i:s \leq r_i \leq t}p_i-(t-s)$. Suppose towards contradiction that there exists a job $j$ with $s\leq r_j \leq t$. Because we have a ray for $[s,t]$ it follows that $t \leq \en(C_0)$. The segments corresponding to $j$ partition the interval $[r_j, \en(C_0)]$, so there is a rectangle $R$ belonging to $j$ with $t \in \mathrm{proj}_{x}(R)$. But this implies $R \in \R(L)$ contradicting $\R(L)= \emptyset$. We conclude that there is no job $j$ with $s\leq r_j \leq t$. If $s$ is the release time of a job $j$ then $s\leq r_j \leq t$, so $s$ is cannot be a release time of any job. Furthermore $d([s,t])= \sum_{i:s \leq r_i \leq t}p_i-(t-s)=0-(t-s) \leq 0$.
\end{proof}
Suppose for the remainder of this section that $A$ contains at least one rectangle from $\R$. 
In our next case will handle what we will call canonical DP-cells.
We say that $(j,C,\k,\{f_{C'}\}_{C'})$ is a \emph{canonical DP-cell} if it satisfies the following properties 
\begin{itemize}
\item $\R(j,C)\ne\emptyset$,
\item for each rectangle $R\in\R(j,C)$ we have that $R\subseteq A$, and
\item the leftmost $x$-coordinate of $A$ coincides with the leftmost $x$-coordinate
of the leftmost rectangle in $\R(j,C)$. 
\end{itemize}
Note that due to this specification also the
rightmost $x$-coordinate of $A$ coincides with the rightmost $x$-coordinate
of the rightmost rectangle in $\R(j,C)$. 
Suppose that $(j,C,\k,\{f_{C'}\}_{C'})$ is a canonical DP-cell.
Let $\R^{*}$
be the optimal solution to $(j,C,\k,\{f_{C'}\}_{C'})$ and define
$\mathcal{R}^{*}(j,C):=\R^{*}\cap\mathcal{R}(j,C)$. Next,
we define values
$\{f_{C'}^{*}\}_{C'\in\C(C,\k)}$: for each $C'\in\C(C,\k)$ there
is one rectangle $R_{C'}\in\mathcal{R}(j,C)$ with $\mathrm{proj}_{x}(R_{C'})=C'$
and the value $f_{C'}^{*}$ will depend on whether we select
$R_{C'}$ in the optimal solution or not. We set 
\[
f_{C'}^{*}:=\max\{ 0,\ f_{C'}+p_{j}-r_{j+1}+r_{j}-p(\R^{*}\cap\{R_{C'}\}) \} .
\]

\begin{figure}
	\centering
	\begin{tikzpicture}[scale=1.75]
  \def\dy{0.5/4}

  \fill[lightgray] (5, 10 - 11 * \dy) rectangle (9, 10 - 12 * \dy);
  \draw[|-|, thick] (1, 10 - 21 * \dy) to node[pos=0.5, below] {Cell $C$} (9, 10 - 21 * \dy);
 
  \fill[pattern=north west lines, pattern color=gray] (5, 10 - 12 * \dy) rectangle (9, 10 - 19 * \dy);

  \input{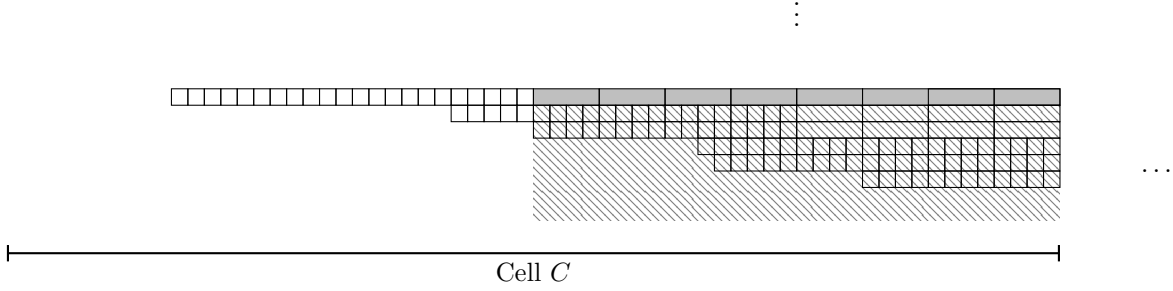}

\end{tikzpicture}
	\caption{Canonical DP-cell. The area $A$ (hatched) spans the same range on the $x$-axis as $\mathcal R(j, C)$ (gray).}
	\label{fig:my_label}
\end{figure}
\begin{lem}
\label{lem:DP-canonical}Assume that $(j,C,\k,\{f_{C'}\}_{C'})$ is
a canonical DP-cell. Then the DP-cell $(j+1,C,\k,\{f_{C'}^{*}\}_{C'})$
exists, it has a feasible solution, and $\mathcal{R}^{*}(j,C)\cup\OPT(j+1,C,\k,\{f_{C'}^{*}\}_{C'})$
is an optimal solution for the DP-cell $(j,C,\k,\{f_{C'}\}_{C'})$. 
\end{lem}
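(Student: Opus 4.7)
The plan is to establish the lemma by verifying three claims in turn: (A) the DP-cell $(j+1, C, \k, \{f_{C'}^*\}_{C'})$ is well-defined; (B) the residual set $\R' := \R^* \setminus \mathcal{R}^*(j,C)$ is a feasible solution for it; and (C) the combination $\tilde\R := \mathcal{R}^*(j,C) \cup \OPT(j+1, C, \k, \{f_{C'}^*\}_{C'})$ is feasible for $(j, C, \k, \{f_{C'}\}_{C'})$ and has cost at most $c(\R^*)$. The single identity driving the construction of $f_{C'}^*$ is
\[
d([r_j, t]) - d([r_{j+1}, t]) \;=\; p_j + r_j - r_{j+1},
\]
which follows directly from the definition $d([s,t]) = \sum_{i : s\le r_i\le t}p_i - (t-s)$ and the fact that the only release time in $[r_j, r_{j+1})$ is $r_j$ (release times are distinct by Lemma~\ref{lem:release-times}). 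Paired with the observation that $\R([r_j, t]) \setminus \R([r_{j+1}, t]) = \{R_{C'}\}$ whenever $t \in C'$ — because the $y$-strip $[j, j+1)$ contains only job-$j$ rectangles, and exactly $R_{C'}$ has $t+\tfrac{1}{2}$ in its $x$-projection — this identity is precisely what the definition of $f_{C'}^*$ is engineered to track.

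For (A), I would go through each requirement of the DP-cell definition. The area $A(j+1, C, \k)$ has the same $x$-range as $A(j, C, \k)$ and starts at $y = j+1$, so it is nonempty and $A(j+1, C, \k) \subseteq A(j, C, \k)$; consequently the containment/disjointness condition on each $\R(j', C')$ is inherited, the only rectangles lost being those of job $j$. The bound $f_{C'}^* \le \sum_{j'\le j}p_{j'}$ reduces to $f_{C'} \le \sum_{j'<j}p_{j'}$ plus $p_j + r_j - r_{j+1} \le p_j$ (since jobs are sorted), while nonnegativity is built into the max.

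For (B), the canonical assumption implies that $\R^*$ contains no job-$j$ rectangle outside $\R(j, C)$: the groups $\mathrm{Seg}(j, \cdot)$ are supported on cells with pairwise disjoint $x$-extents, and in the canonical case $A$'s $x$-range coincides with $\R(j, C)$'s, so any other $\R(j, C'')$ lies outside $A$. Hence $\R^* \cap \R(j) = \mathcal{R}^*(j, C)$, $\R' \subseteq \R_\down(j+1, C, \k)$, and the prefix property is inherited trivially. For any ray $L(I) \subseteq A(j+1, C, \k)$, one has $j(I)\ge j+1$, so $L(I)$ starts at $y \ge j+\tfrac{3}{2}$ and cannot intersect job-$j$ rectangles; thus $\R' \cap \R(I) = \R^* \cap \R(I)$, which covers $d(I)$ by the corresponding bullet in $(j, C, \k, \{f_{C'}\}_{C'})$. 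For the extra demand on $L([r_{j+1}, t])$ with $t \in C' \in \C(C, \k)$, combining the set equality above with the constraint $p(\R^*\cap\R([r_j, t])) \ge d([r_j, t]) + f_{C'}$ and the identity yields
\[
p(\R^* \cap \R([r_{j+1}, t])) \;\ge\; d([r_{j+1}, t]) + f_{C'} + p_j + r_j - r_{j+1} - p(\R^* \cap \{R_{C'}\}).
\]
If the bracketed quantity is nonnegative, it equals $f_{C'}^*$ and we are done; otherwise $f_{C'}^* = 0$ and we fall back on the \emph{standard} demand bullet of $(j, C, \k, \{f_{C'}\}_{C'})$ applied to $L([r_{j+1}, t])$ (which lies in $A(j, C, \k)$), yielding $p(\R^* \cap \R([r_{j+1}, t])) \ge d([r_{j+1}, t])$ directly.

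For (C), feasibility of $\R'$ from (B) gives $c(\OPT(j+1, \ldots)) \le c(\R')$, and since $\mathcal{R}^*(j, C)$ (rows $[j, j+1)$) and $\OPT(j+1, \ldots)$ (rows $[j+1, \infty)$) are disjoint, $c(\tilde\R) = c(\mathcal{R}^*(j, C)) + c(\OPT(j+1, \ldots)) \le c(\R^*)$. Feasibility of $\tilde\R$ is essentially the reverse of the argument in (B): containment and the prefix property are immediate; rays $L(I) \subseteq A(j+1, C, \k)$ inherit their demand from $\OPT(j+1,\ldots)$ since $\mathcal{R}^*(j, C)$ does not intersect them; the only nontrivial case is a ray $L([s, t])\subseteq A(j, C, \k)$ with $j(I)=j$, which coincides with $L([r_j, t])$ and has $d([s,t]) \le d([r_j, t])$, so it suffices to check coverage for $t \in C' \in \C(C, \k)$, where applying the extra-demand feasibility of $\OPT(j+1, \ldots)$ to $L([r_{j+1}, t])$, adding $p(\R^* \cap \{R_{C'}\})$, and invoking the definition of $f_{C'}^*$ and the identity recovers $d([r_j, t]) + f_{C'}$. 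The main obstacle I expect is the bookkeeping forced by the $\max\{0, \cdot\}$ truncation in $f_{C'}^*$: the ``clamped'' case is not handled by the extra-demand accounting and must be covered by falling back to the standard demand bullet of $(j, C, \k, \{f_{C'}\}_{C'})$, for which one has to verify that the relevant ray $L([r_{j+1}, t])$ lies in $A(j, C, \k)$ — which it does, by the containment $A(j+1, \ldots) \subseteq A(j, \ldots)$.
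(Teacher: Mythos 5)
Your decomposition---existence of the child DP-cell, feasibility of the restricted solution $\R^{*}\setminus\mathcal{R}^{*}(j,C)$ for it, and feasibility plus cost comparison for the recombined solution---is exactly the paper's proof, down to the key identity $d([r_{j},t])-d([r_{j+1},t])=p_{j}+r_{j}-r_{j+1}$ and the set identity $\R([r_{j},t])=\R([r_{j+1},t])\cup\{R_{C'}\}$. Your handling of the $\max\{0,\cdot\}$ truncation in step (B) (falling back on the standard demand bullet of the parent cell when the clamp is active) is also how the paper argues, and you correctly note that in step (C) the clamp is harmless since there $f_{C'}^{*}$ only needs to be bounded from below.

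There is, however, one sub-case in step (C) that your argument as written does not cover: a ray $L([s,t])\subseteq A(j,C,\k)$ with $j(I)=j$ but $t<r_{j+1}$---in particular \emph{every} such ray when $j=n$, since $r_{n+1}:=T+1$. For these rays $L([r_{j+1},t])$ does not exist, so the extra-demand feasibility of $\OPT(j+1,C,\k,\{f_{C'}^{*}\}_{C'})$ cannot be invoked. The paper isolates this as the case $\R([s,t])\cap\mathcal{R}_{\down}(j+1,C,\k)=\emptyset$: there the ray meets, within $A$, only rectangles of $\R(j,C)$, so $p(\tilde\R\cap\R([r_{j},t]))\ge p(\R^{*}\cap\R([r_{j},t]))\ge d([r_{j},t])+f_{C'}$ follows directly from feasibility of $\R^{*}$ for the parent cell, with no reference to the child cell at all. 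The fix uses only ingredients you already have, but the case must be stated explicitly---otherwise the claimed coverage of $d([r_{j},t])+f_{C'}$ is unproven precisely for the rays that end before job $j+1$ is released.
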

\begin{proof}
	Note that a DP-cell can only be canonical if $j\leq n$, because $\R(j, C) \neq \emptyset$. First we need to show the existence of the DP-cell $(j+1,C,\k,\{f_{C'}^{*}\}_{C'})$. 
	The only nontrivial part about the conditions for existence (see Section~\ref{sec:DP-table}) are the upper bound on $f_{C'}$ and the last property. Let $C' \in \C(C, \k)$. As $f_{C'}^{*} \leq f_{C'}+p_{j}$ and $f_{C'} \leq \sum_{j': j'<j}p_{j'}$ we obtain $f_{C'}^{*}\leq \sum_{j': j'<j+1}p_{j'}$, so $f_{C'}^{*}$ is within the given bounds.
	As $A(j, C, \k) \setminus A(j+1, C, k) \subseteq \mathbb{R} \times [j, j+1)$, a rectangle $R \in  \R(j', C')$ for a job $j'$ and a cell $C'$ with $ R \subseteq A(j, C, \k)$ is contained in $A(j+1, C, k)$ if and only if $j'\geq j+1$. 
	This implies that a rectangle $R \in \R(j', C')$ is contained in $A(j+1, C, \k)$ if and only if $j'\geq j+1$ and $R \subseteq A(j, C, k)$.
	As $(j,C,\k,\{f_{C'}\}_{C'})$ is a DP-cell, either all rectangles in $\R(j', C')$ are contained in $A(j, C, \k)$ or none of them intersect $A(j, C, \k)$. This implies that either all rectangles in $\R(j', C')$ are contained in $A(j+1, C, \k)$ or none of them intersect $A(j+1, C, \k)$. It follows that the DP-cell $(j+1,C,\k,\{f_{C'}^{*}\}_{C'})$ exists.
		
	Next we will show that $\R'':=\R^{*} \cap \mathcal{R}_{\down}(j+1,C,\k)$ is a feasible solution for $(j+1,C,\k,\{f_{C'}^{*}\}_{C'})$. 
	The first three properties of $\R''$ being a feasible solution
	(see Section~\ref{sec:DP-table}) follow directly from the fact that $\R^{*}$ is feasible for $(j,C,\k,\{f_{C'}\}_{C'})$. 
	For the fourth property, let $C'\in \C(C,\k)$, $t \in C'$ and suppose that there is a ray $L([r_{j+1}, t])$ as there is nothing to show if the ray does not exists. Then also the ray $L([r_{j}, t])$ exists.
	As $\R^{*}$ is feasible for $(j,C,\k,\{f_{C'}\}_{C'})$ we know that the ray $L([r_{j}, t])$ is covered by $\R^{*}$, i.e. $p(\R^{*}\cap\R([r_{j},t]))\ge d([r_{j},t])+f_{C'}$. As $\R([r_{j+1}, t])\cup \{R_{C'}\} = \R([r_{j}, t])$, it follows that
	\begin{align*}
		p(\R'' \cap \R([r_{j+1}, t]))&= p(\R^{*} \cap \R([r_{j+1}, t]))\\
		&=p(\R^{*} \cap \R([r_{j}, t]))-p(\R^{*} \cap \{R_{C'}\})\\
		&\geq d([r_{j},t])+f_{C'}-p(\R^{*} \cap \{R_{C'}\})\\
		&=d([r_{j+1},t])+f_{C'}+p_j-r_{j+1}+r_j-p(\R^{*} \cap \{R_{C'}\}) .
	\end{align*}
	As all release times are different by Lemma~\ref{lem:release-times} it follows $\L([r_{j+1}, t]) \subseteq A(j+1, C, \k)$. Hence, the third property yields $p(\R'' \cap \R([r_{j+1}, t]))\geq d([r_{j+1},t])$, which together with the inequality above implies that
	\begin{align*}
	    p(\R'' \cap \R([r_{j+1}, t])) &\geq d([r_{j+1},t])+\max\{0,\ f_{C'}+p_j-r_{j+1}+r_j-p(\R^{*} \cap \{R_{C'}\})\} \\
	    &= d([r_{j+1},t])+f_{C'}^{*} .
	\end{align*}
	Therefore, $\R''$ is a feasible solution for $(j+1,C,d,\{f_{C'}^{*}\}_{C'})$.
		
	Now we will prove that $\R':=\mathcal{R}^{*}(j,C)\cup \OPT(j+1,C,\k,\{f_{C'}^{*}\}_{C'})$ is feasible for $(j,C,\k,\{f_{C'}\}_{C'})$. Clearly $\R' \subseteq \mathcal{R}_{\down}(j,C,\k)$. 
	As $\mathcal{R}^{*}(j,C)$ is a prefix of $\mathcal{R}(j,C)$ and we only select a prefix of $\R(j', C')$ for any $j',C'$ in the solution for $(j+1,C,\k,\{f_{C'}^{*}\}_{C'})$, the set $\R'$ also contains only a prefix of any set $\R(j', C')$.
		
	It remains to check that the demands are covered sufficiently. 
	As the DP-cell  $(j,C,\k,\{f_{C'}\}_{C'})$ is
	a canonical DP-cell, all rectangles in $\mathcal{R}_{\down}(j,C,\k)$ corresponding to $j$ are in $\R(j, C)$. 
	So it follows that $\mathcal{R}_{\down}(j,C,\k)=\mathcal{R}_{\down}(j+1,C,\k)\dot{\cup} \R(j, C)$.
		
	Consider an interval $[s,t]$ with $L([s, t])\subseteq A(j, C, \k)$.
	In the case that $\R([s, t]) \cap \mathcal{R}_{\down}(j+1,C,\k)= \emptyset$, it follows from $\R^{*}$ being a solution for $(j,C,\k,\{f_{C'}\}_{C'})$ that the demand $d([s, t])$ is covered by $\R^{*}$, so it is covered by $\R^{*} \cap \R([s, t]) \subseteq \R^{*} \cap \R(j, C) = \R^{*}(j, C) \subseteq \R'$. If $s=r_j$ and $t \in C'$ for some $ C' \in \C(C, \k) $ then the same argument shows that $\R'$ covers the extended demand $d([r_j, t])+f_{C'}$. Note that the case $L([s, t])\subseteq A(j, C, \k)$ always occurs if $j=n$ because $\mathcal{R}_{\down}(n+1,C,\k)= \emptyset$. Thus we we will assume for the remaining cases that $j<n$.
		
	In the case that $\R([s, t]) \subseteq \mathcal{R}_{\down}(j+1,C,d)$, it follows from $\OPT(j+1,C,\k,\{f_{C'}^{*}\}_{C'})$ being a solution for $(j+1,C,\k,\{f_{C'}^{*}\}_{C'})$ that $\R'$ covers the demand of $[s,t]$. 
		
	So the remaining case is that $\R([s, t])$ contains rectangles in $\mathcal{R}_{\down}(j+1,C,d)$ and rectangles in $\R(j, C)$. 
	The first part implies $t \geq r_{j+1}$ because the rectangles in $\mathcal{R}_{\down}(j+1,C,d)$ all have a left $x$-coordinate of at least $r_{j+1}$. The second part together with $L([s,t]) \subseteq A(j, C, \k)$ implies $L([s,t])=[t+\frac{1}{2}] \times [j+\frac{1}{2}, \infty)$.
	So we get $L([s,t])=L([r_j, t])$ and therefore $\R([s,t])=\R([r_j, t])$. As there is no job released between $s$ and $r_j$ it follows that
	\begin{equation*}
	    d([s,t])=\sum_{i:s\le r_{i}\le t}p_{i}-(t-s)=\sum_{i:r_j\le r_{i}\le t}p_{i}-(t-r_j)-(s-r_j)=d([r_j, t])-(s-r_j)\leq d([r_j, t]) .
	\end{equation*}
	So if we cover the demand of $[r_j, t]$ we also cover the demand of $[s,t]$. For this reason we only show that the demands with $s=r_j$ are covered.
		
	As $t \geq r_{j+1}$ there is a ray $L([r_{j+1}, t])$ and since all release times are different by Lemma~\ref{lem:release-times} it follows  $L([r_{j+1}, t])\subseteq A(j+1, C,d)$. 
	For this ray it holds that \begin{equation*}
	    p(\OPT(j+1,C,\k,\{f_{C'}^{*}\}_{C'}) \cap \R([r_{j+1}, t])) \geq d([r_{j+1}, t])+f_{C'}^{*} .
	\end{equation*} 
	The ray $L([r_{j+1}, t])$ intersects with a rectangle $R \in \R(j, C)$, if and only if $t \in \mathrm{proj}_{x}(R)$. 
	There is a unique $C' \in \C(C, \k)$ with $t \in C'$ and this $C'$ is the projection of a rectangle $R_{C'} \in \R(j, C)$. 
	Thus, $t \in \mathrm{proj}_{x}(R)$ is equivalent to $R=R_{C'}$. This implies
	\begin{align*}
		p(\R' \cap\R([r_{j}, t]))&=p(\OPT(j+1,C,d,\{f_{C'}^{*}\}_{C'}) \cap \R([r_{j}, t])) + p(\R^{*}(j, C)\cap \R([r_{j}, t])) \\
		&=p(\OPT(j+1,C,d,\{f_{C'}^{*}\}_{C'}) \cap \R([r_{j+1}, t])) + p(\R^{*}(j, C)\cap \{R_{C'}\})\\
		&\geq d([r_{j+1}, t])+f_{C'}^{*}+p(\R^{*}(j, C)\cap \{R_{C'}\})\\
		&= d([r_{j}, t])-p_j+r_{j+1}-r_j+f_{C'}^{*}+p(\R^{*}(j, C)\cap \{R_{C'}\})\\
		&\geq d([r_{j}, t])+f_{C'} .
	\end{align*}
	Here we used in the second-to-last step that $d([r_{j}, t])=d([r_{j+1}, t])+p_j-r_{j+1}+r_{j}$, which follows from the definition of the demand $d([r_j, t])= \sum_{i:r_j \leq r_i\leq t}p_i-(t-r_j)$ and the fact that all release times are different. As $f_{C'} \geq 0$ we also get $p(\R' \cap\R([r_{j}, t]))\geq d([r_j, t])$. Altogether this implies that $\R'$ is a feasible solution.
		
	Since $c(\R')=c(\R^{*}(j, C))+c(\OPT(j+1,C,\k,\{f_{C'}^{*}\}_{C'}))\leq c(\R^{*}(j, C)) +c(\R'')=c(\R^{*})$ and $\R^{*}$ is optimal, we conclude that $\R'$ is an optimal solution.
\end{proof}
Suppose now that $(j,C,\k,\{f_{C'}\}_{C'})$ is a non-canonical DP-cell (i.e., not a canonical cell) and that $\ell(C)<\ell_{\max}$.
Let $C^{(k)}$ be the $\k$-th child of $C$. For each $C''\in\C(C^{(k)},1)$
we define a value $g_{C''}^{*}$ as follows: we identify the (unique)
cell $C'\in\C(C,\k)$ with $C''\subseteq C'$ and we define $g_{C''}^{*}:=f_{C'}.$
Also, we define $\left\{ f_{C'}^{*}\right\} _{C'}$ to be the restriction
of $\left\{ f_{C'}\right\} _{C'}$ to the intervals in $\C(C,\k+1)$.

\begin{figure}
	\centering
	\begin{tikzpicture}[scale=1.75]
  \def\dy{0.5/4}

  \fill[lightgray] (5, 10 - 11 * \dy) rectangle (9, 10 - 12 * \dy);
  \draw[|-|, thick] (1, 10 - 21 * \dy) to node[pos=0.6, below] {Cell $C$} (9, 10 - 21 * \dy);
  \draw[|-|, thick] (3, 10 - 23 * \dy) to node[pos=0.5, below] {Cell $C^{(k)}$} (5, 10 - 23 * \dy);

  \fill[pattern=north west lines, pattern color=gray] (5.03, 10 - 11 * \dy) rectangle (9, 10 - 19 * \dy);
  \fill[pattern=north east lines, pattern color=gray] (3, 10 - 11 * \dy) rectangle (4.97, 10 - 19 * \dy);

  \input{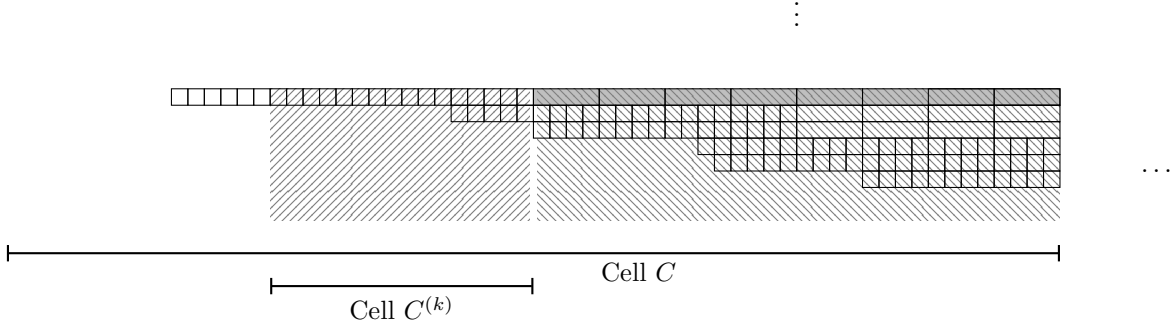}

\end{tikzpicture}
	\caption{A non-canonical DP-cell for the cell $C$. Its corresponding area $A$ is the union of the two hatched areas. 
 Its optimal solution decomposes into two independent solutions for other DP-cells (the two hatched areas). 
 In the figure we have that $k<K$. If $k=K$ 
 then we would reduce only to the left DP-cell.}
	\label{fig:my_label}
\end{figure}
\begin{lem}
\label{lem:DP-non-canonical-1}Assume that $(j,C,\k,\{f_{C'}\}_{C'})$
is a non-canonical DP-cell and that $\ell(C)<\ell_{\max}$. Then $(j,C^{(k)},1,\{g_{C''}^{*}\}_{C''})$
is a DP-cell and if $k<K$ then also $(j,C,\k+1,\{f_{C'}^{*}\}_{C'})$
is a DP-cell. 
\begin{itemize}
\item If $k<K$ then $\OPT(j,C^{(k)},1,\{g_{C''}^{*}\}_{C''})\cup\OPT(j,C,\k+1,\{f_{C'}^{*}\}_{C'})$
is an optimal solution for $(j,C,\k,\{f_{C'}\}_{C'})$. 
\item If $k=K$ then $\OPT(j,C^{(k)},1,\{g_{C''}^{*}\}_{C''})$ is an optimal
solution for $(j,C,\k,\{f_{C'}\}_{C'})$. 
\end{itemize}
\end{lem}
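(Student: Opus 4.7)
My plan is to establish the decomposition explicitly and apply it in both directions. Note that $A := A(j, C, \k)$ partitions disjointly as $A = A_1 \cup A_2$ where $A_1 := A(j, C^{(k)}, 1)$ and, for $\k < K$, $A_2 := A(j, C, \k+1)$; for $\k = K$ we have $A = A_1$. This partition propagates to $\R_{\down}$ and to every ray $L([s,t]) \subseteq A$, since its vertical line at $x = t + \tfrac{1}{2}$ lies in exactly one of $A_1, A_2$. Writing $\OPT_1 := \OPT(j, C^{(k)}, 1, \{g^*_{C''}\}_{C''})$ and (when $\k < K$) $\OPT_2 := \OPT(j, C, \k+1, \{f^*_{C'}\}_{C'})$, it then suffices to (i) check that both sub-DP-cells are well-defined, (ii) show $\OPT_1 \cup \OPT_2$ is feasible for the outer DP-cell, and (iii) show optimality by splitting any feasible $\R^*$ for the outer DP-cell into two pieces feasible for the sub-DP-cells.

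For existence of the sub-DP-cells, the numeric bounds on $\{g^*_{C''}\}$ and $\{f^*_{C'}\}$ follow trivially since they equal or restrict existing $f$-values. The non-trivial condition is the partition invariant: for each $(j', C')$, either $\R(j', C')$ lies entirely in the sub-area (with $C'$ a descendant of or equal to the sub-cell) or it does not intersect the sub-area. When $C'$ is a proper descendant of $C$, this inherits from the outer DP-cell because each rectangle's $x$-span equals a single segment, itself contained in a single child $C^{(k')}$ of $C$, so the rectangles of $\R(j', C')$ live entirely in one of $A_1, A_2$. The borderline case is $C' = C$, where $\mathrm{Seg}(j', C)$ spans some interval $[\beg(C^{(k_0)}), \en(C))$ that could a priori straddle the $\beg(C^{(k+1)})$ boundary. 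Here I would invoke the monotonicity property that for $r_{j'} \geq r_{j''}$ one has $\mathrm{Seg}(j', C) \subseteq \mathrm{Seg}(j'', C)$ (when both are non-empty), combined with the non-canonicity of the outer DP-cell (which forces the leftmost endpoint of $\mathrm{Seg}(j, C)$ to lie strictly right of $\beg(C^{(k)})$ whenever $\R(j, C) \neq \emptyset$ and $\R(j, C) \subseteq A$), to conclude that $\R(j', C)$ either avoids $A_1$ entirely or lies entirely inside $A_1$.

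Feasibility of $\OPT_1 \cup \OPT_2$ then follows from the partition. The prefix condition holds because, by the previous paragraph, $\R(j', C') \cap \R_{\down}(j, C, \k)$ lies in exactly one of $\R_{\down}(j, C^{(k)}, 1)$ or $\R_{\down}(j, C, \k+1)$, so $(\OPT_1 \cup \OPT_2) \cap \R(j', C') = \OPT_i \cap \R(j', C')$ for the right $i$, which is a prefix of $\R(j', C')$. Every ray $L([s, t]) \subseteq A$ lies entirely in one sub-area by the $x$-coordinate argument and is covered by the corresponding $\OPT_i$. For the extended demand at $C'' \in \C(C^{(k)}, 1)$, with unique $C' \in \C(C, \k)$ such that $C'' \subseteq C'$, the definitions give $g^*_{C''} = f_{C'}$ and $t \in C'' \Rightarrow t \in C'$, so the first sub-DP-cell's extended demand matches the outer one. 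The extended demands in $A_2$ transfer identically since $f^*_{C'} = f_{C'}$ on $\C(C, \k+1) \subseteq \C(C, \k)$.

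For optimality, take any feasible $\R^*$ for $(j, C, \k, \{f_{C'}\}_{C'})$ and split it as $\R_1^* := \R^* \cap \R_{\down}(j, C^{(k)}, 1)$ and (if $\k < K$) $\R_2^* := \R^* \cap \R_{\down}(j, C, \k+1)$. By reversing the arguments above, each $\R_i^*$ is feasible for the respective sub-DP-cell, so cost additivity yields $c(\R^*) = c(\R_1^*) + c(\R_2^*) \geq c(\OPT_1) + c(\OPT_2)$, proving optimality of $\OPT_1 \cup \OPT_2$. The case $\k = K$ is essentially a relabeling since $A = A_1$. The main obstacle I foresee is the borderline case $C' = C$ in the partition invariant: one has to carefully combine segment-group monotonicity with the precise content of non-canonicity (including the case $\R(j, C) = \emptyset$) to rule out any $\R(j', C)$ splitting across the $A_1/A_2$ boundary, since such a split would violate the existence of the sub-DP-cell itself.
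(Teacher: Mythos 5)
Your proposal is correct and follows essentially the same route as the paper: the disjoint decomposition $A(j,C,k) = A(j,C^{(k)},1)\,\dot{\cup}\,A(j,C,k+1)$, existence of the sub-DP-cells via the partition invariant, feasibility in both directions, and cost additivity. The borderline case you flag ($C'=C$) is resolved in the paper exactly as you anticipate: non-canonicity plus the fact that the span of $\mathrm{Seg}(j,C)$ starts at a child boundary forces that span to be $[\en(C^{(i)}),\en(C))$ with $i\ge k$, so $\R(j,C)$ avoids $C^{(k)}$ entirely, and the case $j'>j$ reduces to this via the segment-nesting property.
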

\begin{proof}
We start by showing that the union of the projections of $\R(j, C)$ to the $x$-axis forms an interval $I = [\mathrm{end}(C^{(i)}), \mathrm{end}(C))$, where $C^{(i)}$ is the $i$-th child of $C$ and $i \ge k$.
		The projections of the rectangles $\R(j, C)$ are the segments $\mathrm{Seg}(j, C)$ and the union of these segments
		is by definition a possibly empty interval $I$ ending in $\en(C)$. 
        This interval starts with the leftmost $x$-coordinate of the leftmost rectangle in $\R(j, C)$ if $\R(j, C) \neq \emptyset$ and is empty otherwise. 
        Let $s$ be the left endpoint of $I$ if $\R(j, C) \neq \emptyset$ and let $s= \en(C)$ if $\R(j, C) = \emptyset$. 
        Then $I=[s, \en(C))$. 
        We need to show that $s$ is the endpoint of a child cell of $C$. 
        If $\R(j, C) = \emptyset$ then as $\en(C)=\en(C^{(K)})$ we get $I=[\en(C^{(K)}), \en(C))$. 
		So for now consider $\R(j, C) \neq \emptyset$. 
        As $\ell(C) \leq \ell_{\max}-1$ the structure of the segments implies that there is a value $i' \in [K]$ such that $s=\beg(C^{(i')})$ where $C^{(i')}$ is the $i'$-th child of $C$. 
        As $i'<k$ would imply that some rectangles from $\R(j, C)$ are contained in $A(j, C, \k)$ and others are not, this contradicts the existence of the DP-cell $(j,C,\k,\{f_{C'}\}_{C'})$. 
        So $i' \geq k$. As $i'=k$ implies that the DP-cell $(j,C,\k,\{f_{C'}\}_{C'})$ is a canonical DP-cell, this is also not possible. 
        So $i' \geq k+1$ . So with substituting $i=i'-1$ and using $i'\geq 2$ we know that the $i$-th child of $C$ exists and as $\en(C^{(i'-1)})= \beg(C^{(i')})=s$ we get $I=[\en(C^{(i)}), \en(C))$. 

		We show now the existence of the DP-cell $(j,C^{(k)},1,\{g_{C''}^{*}\}_{C''})$. 
  Since the values $\{g_{C''}^{*}\}_{C''}$ are taken from the values $\{f_{C'}\}_{C'}$ the upper bounds for them are automatically respected. Thus,
        it suffices to show that the last property of the definition of the DP-cells (see Section~\ref{sec:DP-table}) holds.
        Let $C'$ be a cell such that $\R(j, C') \neq \emptyset$. 
        Suppose that at least one rectangle from $\R(j, C')$ intersects with $A(j, C^{(k)}, 1)$. 
        Then also at least one rectangle from $\R(j, C')$ intersects with $A(j, C, \k)$ because $A(j, C^{(k)}, 1) \subseteq A(j, C, \k)$. 
        As the DP-cell $(j,C,\k,\{f_{C'}\}_{C'})$ exists, every rectangle from $\R(j, C')$ is contained in $A(j, C, \k)$ and $C'=C$ or $C'$ is a descendant of $C$. 
        As $I=[\en(C^{(i)}), \en(C))$ with $i \geq k$ and $\en(C^{(k)}) \not\in C^{(k)}$ we obtain $I \cap C^{(k)}= \emptyset$. This implies $C' \neq C$. So $C'$ is a descendant of $C$. 
        Then $C' \subseteq C''$ for a child $C''$ of $C$. 
        If $C'' \neq C^{(k)}$, then $C' \cap C^{(k)} \subseteq C'' \cap C^{(k)}= \emptyset$ which contradicts that a rectangle in $\R(j, C')$ intersects with $A(j, C^{(k)}, 1)$. 
        So we get $C''=C^{(k)}$ and therefore $C'=C^{(k)}$ or $C'$ is a descendant of $C^{(k)}$. 
        So the last property holds for $j$ and each cell $C'$.
		
		Next, consider a job $j'\ne j$ and a cell $C'$ with $\R(j', C') \neq \emptyset$.
		Suppose again that at least one rectangle from $\R(j', C')$ intersects with $A(j, C^{(k)}, 1)$. This implies $j' \geq j$ and thus $j'>j$.
		From the definition of the segments we know that there exists a cell $C''$ such that the union of $\mathrm{Seg}(j', C')$ is contained in the union of $\mathrm{Seg}(j, C'')$ and additionally it holds that either $C'=C''$ or $C'$ is a descendant of $C''$. 
        Then at least one rectangle from $\R(j, C'')$ intersects with $A(j, C^{(k)}, 1)$. 
        So the part for $j'=j$ above implies that every rectangle in $\R(j, C'')$ is contained in $A(j, C^{(k)}, 1)$ and $C''=C^{(k)}$ or $C''$ is a descendant of $C^{(k)}$. 
        Together we get that every rectangle in $\R(j', C')$ is contained in $A(j, C^{(k)}, 1)$ and that $C'=C^{(k)}$ or $C'$ is a descendant of $C^{(k)}$. 
        This proves the last property of the definition of our DP-cells and we conclude that the DP-cell $(j,C^{(k)},1,\{g_{C''}^{*}\}_{C''})$ exists.
		
		Now we consider the case $k=K$ (we will  consider the more complicated case $k<K$ afterwards).
		We show that the sub-problems $(j,C,\k,\{f_{C'}\}_{C'})$ and $(j,C^{(k)},1,\{g_{C''}^{*}\}_{C''})$ have the same set of solutions. 
		By definition of $A$ it holds that $A(j, C, K)=A(j, C^{(K)}, 1)$, so $\R_{\downarrow}(j, C, K)=\R_{\downarrow}(j, C^{(K)}, 1)$.
	Consider a ray $L(I)$ for some interval $I$. Then $L(I) \subseteq A(j, C, K)$ is equivalent to $L(I) \subseteq A(j, C^{(K)},1)$, so both sub-problems need to cover the same demands. It remains to verify that the values of $\{f_{C'}\}_{C'}$ and $\{g^*_{C''}\}_{C''}$ yield equivalent conditions for the additional demands for some of the rays.
        The union of the intervals in $\C(C, K)$ is $C^{(K)}$ which is the same as the union of the intervals in $\C(C^{(K)}, 1)$.
        Let $t \in C^{(k)}$ with $t \geq r_j$. Then there exists $C'' \in \C(C^{(K)}, 1)$ with $t \in C''$ and $C' \in \C(C, K)$ with $C'' \subseteq C'$. Note that $f_{C'}=g_{C''}^{*}$.
        This implies $d([r_j, t])+f_{C'}=d([r_j, t])+g_{C''}^{*}$, so the extended demands for each ray $L([r_j, t])$ with $t \in C^{(k)}$ are also the same. 
        Altogether this shows that the sub-problems $(j,C,\k,\{f_{C'}\}_{C'})$ and $(j,C^{(k)},1,\{g_{C''}^{*}\}_{C''})$ have the same set of solutions. 
        In particular, the set of optimal solutions for the sub-problems are the same.
		
		Now we prove the lemma for the case $k<K$. 
        We first show that the DP-cell $(j,C,\k+1,\{f_{C'}^{*}\}_{C'})$ exists. 
        Again we only need to check the last property of the definition of the DP-cells. 
        Consider a job $j'$ and a cell $C'$. 
        From the definition of the respective sets $A$ it follows that $A(j, C, \k)= A(j, C^{(k)}, 1) \dot{\cup}A(j, C, k+1)$. 
        A rectangle $R \in \R(j', C')$ is contained in $A(j, C, \k+1)$ if and only if it is contained in $A(j, C, \k)$ and it does not intersect $A(j, C^{(k)}, 1)$.
		We already know that the DP-cells $(j, C, \k, \{f_{C'}\}_{C'})$ and $(j,C^{(k)},1,\{g_{C''}^{*}\}_{C''})$ exist, so they fulfill the last property in the definition of a DP-cell.
        It follows that either 
        \begin{itemize}
            \item all rectangles in $\R(j', C')$ are contained in $A(j, C, \k)$ and none of them intersect $A(j, C^{(k)}, 1)$, or
            \item all rectangles in $\R(j', C')$ are contained in $A(j, C, \k)$ and
 all rectangles in $\R(j', C')$ are contained in $A(j, C^{(k)}, 1)$, or
            \item no rectangle from $\R(j', C')$ intersects $A(j, C, \k)$.
            \end{itemize}
        In the first case every rectangle in $\R(j', C')$ is contained in $A(j, C, \k+1)$ and in the second and third case no rectangle in $\R(j', C')$ intersects $A(j, C, \k+1)$. 
        Thus, in all cases either all rectangles in $\R(j', C')$ are contained in $A(j, C, \k+1)$ or no rectangle in $\R(j', C')$ intersects $A(j, C, \k+1)$. 
        If all rectangles in $\R(j', C')$ are contained in $A(j, C, \k+1)$ then all those rectangles are also contained in $A(j, C, \k)$ which then implies that $C'=C$ or $C'$ is a descendant of $C$. 
        This shows the last property of the definition of the DP-cells and, therefore the DP-cell $(j,C,\k+1,\{f_{C'}^{*}\}_{C'})$ exists.
		
		Now we show that $\R':=\R^{*} \cap \R_{\down}(j, C^{(k)}, 1)$ is a feasible solution for $(j,C^{(k)},1,\{g_{C''}^{*}\}_{C''})$. 
       It is immediate that $\R' \subseteq \R_{\down}(j, C^{(k)}, 1)$.
        Consider a job $j'$ and a cell $C'$. Then $\R^{*} \cap \R(j', C')$ forms a prefix of $\R(j', C')$ and as $\R_{\down}(j, C^{(k)}, 1)$ either contains all rectangles from $\R(j', C')$ or none of them, also $\R' \cap \R(j', C')=R^{*} \cap \R_{\down}(j, C^{(k)}, 1) \cap \R^{*}$ forms a prefix of $\R(j', C')$.	
		Let $I$ be an interval with
        $L(I) \in \L$ and 
        $L(I) \subseteq A(j, C^{(k)},1)$. 
        Then $L(I) \subseteq A(j, C, \k)$ and $p(\R^{*} \cap \R(I)) \geq d(I)$. 
        Then $L(I) \subseteq A(j, C^{(k)}, 1)$ implies that $\R(I) \subseteq \R_{\down}(j, C^{(k)}, 1)$ and therefore $p(R' \cap \R(I)) = p(\R^{*} \cap \R(I))\geq d(I)$.	
		Let $C'' \in \C(C^{(k)}, 1)$ and $t \in C''$. Then there exists a unique $C' \in \C(C, k)$ with $C'' \subseteq C'$ and therefore $t \in C'$. 
        As $\R^{*}$ is feasible for $(j,C,\k,\{f_{C'}\}_{C'})$ it follows that $p(\R^{*} \cap \R([r_j, t])) \geq d([r_j, t]) +f_{C'}$. As $L([r_j, t]) \subseteq A(j, C^{(k)}, 1)$ we get $\R([r_j, t]) \subseteq \R_{\down}(j, C^{(k)}, 1)$ and therefore $p(\R' \cap \R([r_j, t]))=p(\R^{*} \cap \R([r_j, t])) \geq d([r_j, t]) +f_{C'}=d([r_j, t])+g_{C''}^{*}.$ 
        Thus, $\R'$ is a feasible solution for $(j,C^{(k)},1,\{g_{C''}^{*}\}_{C''})$.
		
		Now we show that $\R''=\R^{*} \cap \R_{\down}(j, C, k+1)$ is a feasible solution for $(j,C,\k+1,\{f_{C'}^{*}\}_{C'})$. 
        Clearly $\R'' \subseteq \R_{\down}(j, C, \k+1)$ and for a job $j'$ and a cell $C'$ the set $\R'' \cap \R(j', C')$ forms a prefix of $\R(j', C')$ because $R^{*} \cap \R(j', C')$ forms a prefix of $\R(j', C')$ and $A(j, C, \k+1)$ either contains all rectangles in $\R(j', C')$ or none of them. 
        For a ray $L(I)$ for an interval $I$ with $L(I) \subseteq A(j, C, \k+1)$ holds $L(I) \subseteq A(j, C, \k)$. 
        So $p(R^{*} \cap \R(I)) \geq d(I)$. 
        As $L(I) \subseteq A(j, C, \k+1)$ it follows that $\R(I) \subseteq \R_{\down}(j, C, k+1)$ and therefore $p(R'' \cap \R(I))=p(\R^{*} \cap \R(I)) \geq d(I)$. 
        So the demand of $I$ is covered by $\R''$. 
        Let $C' \in \C(C, \k+1)$ and consider a ray $L([r_j, t])$ with $t \in C'$. 
        Then $C' \in \C(C, \k)$ implies that $p(R^{*} \cap \R([r_j, t])) \geq d([r_j, t])+f_{C'}$. 
        As $L([r_j, t]) \subseteq A(j, C, \k+1)$ it follows that $\R([r_j, t]) \subseteq \R_{\down}(j, C, k+1)$ and therefore $p(R'' \cap \R([r_j, t]))=p(\R^{*} \cap \R([r_j, t])) \geq d([r_j, t])+f_{C'}$. 
        Thus, $\R''$ is a feasible solution for $(j,C,\k+1,\{f_{C'}^{*}\}_{C'})$.
		
		Now let $\R_1$ be an optimal solution for $(j,C^{(k)},1,\{g_{C''}^{*}\}_{C''})$ and let $\R_2$ be an optimal solution for $(j,C,\k+1,\{f_{C'}^{*}\}_{C'})$. 
		We are going to prove that $\R_1 \cup \R_2$ is a feasible solution for $(j,C,\k,\{f_{C'}\}_{C'})$.	
		It is clear that $\R_1 \cup \R_2 \subseteq \R_{\down}(j, C, \k)$ and as for a job $j'$ and a cell $C'$ the sets $\R_1 \cap \R(j', C')$ and $\R_2 \cap \R(j', C')$ both form a possibly empty prefix of $\R(j', C')$, also the set $\left(\R_1 \cup \R_2 \right)\cap \R(j', C')$ forms a prefix of $\R(j', C')$. 
		Now consider a ray $L(I)\in \L$ for an interval $I=[s, t]$ with $L(I) \subseteq A(j, C, k)$. 
        If $t \in C^{(k)}$, then $L(I) \subseteq A(j, C^{(k)}, 1)$ and $L(I)$ is covered by $\R_1$. 
        Otherwise $L(I) \subseteq A(j, C, \k+1)$ and $L(I)$ is covered by $\R_2$. 
        In both cases $L(I)$ is covered by $\R_1 \cup \R_2$.
		Let $C' \in \C(C, k)$ and consider a ray $L([r_j, t])$ with $t \in C'$. 
        First suppose that $C' \in \C(C, \k+1)$. Then $p(\R_2 \cap \R([r_j, t])) \geq d([r_j, t])+f_{C'}^{*}$ and thus also $p(\left( \R_1 \cup \R_2 \right) \cap \R([r_j, t])) \geq d([r_j, t])+f_{C'}^{*}= d([r_j, t])+f_{C'}$. 
        Secondly, consider $C' \in \C(C, k) \setminus  \C(C, k+1)$. Then $t \in C^{(k)}$ and thus there exists $C'' \in \C(C^{(k)}, 1)$ with $t \in C''$. Then $p(\R_1 \cap \R([r_j, t])) \geq d([r_j, t])+g_{C''}^{*}$ and thus also $p(\left( \R_1 \cup \R_2 \right) \cap \R([r_j, t])) \geq d([r_j, t])+g_{C''}^{*}=d([r_j, t])+f_{C'}$. 
        Consequently, we have in both cases that $p(\left( \R_1 \cup \R_2 \right) \cap \R([r_j, t])) \geq d([r_j, t])+f_{C'}$. So $\R_1 \cup \R_2$ is a feasible solution for $(j,C,\k,\{f_{C'}\}_{C'})$.
		
		As $c(\R_1 \cup \R_2)=c(\R_1)+c(\R_2) \leq c(\R')+c(\R'')=c(\R^{*})$ and $\R^{*}$ is an optimal solution for $(j,C,\k,\{f_{C'}\}_{C'})$ it follows that $\R_1 \cup \R_2$ is also an optimal solution for $(j,C,\k,\{f_{C'}\}_{C'})$ which concludes the proof of the lemma.
	\end{proof}

Finally, suppose that $(j,C,\k,\{f_{C'}\}_{C'})$ is a non-canonical
DP-cell and that $\ell(C)=\ell_{\max}$. We define $\left\{ f_{C'}^{*}\right\} _{C'}$
to be the restriction of $\left\{ f_{C'}\right\} _{C'}$ to the intervals
in $\C(C,\k+1)$ and reduce the problem to the DP-cell $\OPT(j,C,\k+1,\{f_{C'}^{*}\}_{C'})$.
This is justified by the following lemma.
\begin{lem}
\label{lem:DP-non-canonical-ell-max}Assume that $(j,C,\k,\{f_{C'}\}_{C'})$
is a non-canonical DP-cell% 
 and that $\ell(C)=\ell_{\max}$. Then $k<K$,
$(j,C,\k+1,\{f_{C'}^{*}\}_{C'})$ is a DP-cell and $\OPT(j,C,\k+1,\{f_{C'}^{*}\}_{C'})$
is an optimal solution for $(j,C,\k,\{f_{C'}\}_{C'})$. 
\end{lem}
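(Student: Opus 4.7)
The proof splits into three claims: (1) $\k<K$, (2) $(j, C, \k+1, \{f_{C'}^{*}\}_{C'})$ is a valid DP-cell, and (3) its optimum solves the original DP-cell. Two structural observations drive the argument: at level $\ell_{\max}$ the cell $C$ has no descendants, so by DP-cell validity the only rectangles in $A := A(j,C,\k)$ come from sets $\R(j',C)$ with $j' \ge j$; and such rectangles are unit-width, which allows us to analyze the strip $X := A \setminus A(j,C,\k+1) = [\beg(C)+\k-1, \beg(C)+\k) \times [j,\infty)$ one rectangle at a time.

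For (1), I argue by contradiction: if $\k = K$, then from $\k \le \len(C) \le K$ at level $\ell_{\max}$ we obtain $\len(C) = K$ and $A$ is the single unit column $[\en(C)-1, \en(C)) \times [j,\infty)$. Since the lemma applies only when $A$ contains a rectangle, some $\R(j',C)$ must contribute; but then ``all-or-none'' would force all of $\R(j',C)$ into this single column, which cannot accommodate all $|\mathrm{Seg}(j',C)|\ge K$ rectangles, a contradiction. A nearly identical argument shows $\k < \len(C)$, which I use in (2) to conclude that $A(j,C,\k+1)$ is nonempty.

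For (2) and (3), the crux is to show that $X$ contains no rectangle from $\R$. Any such rectangle would be unit-width with $x$-range $[\beg(C)+\k-1, \beg(C)+\k)$, and by the observation above would be the leftmost rectangle of some $\R(j',C)$ with $j' \ge j$. For $j'=j$ this is ruled out by non-canonicity, which gives either $\R(j,C)=\emptyset$ or places the leftmost rectangle of $\R(j,C)$ strictly to the right of $\beg(C)+\k-1$. For $j'>j$, I invoke the structural relationship between $\mathrm{Seg}(j,\cdot)$ and $\mathrm{Seg}(j',C)$ from Section~2 together with the DP-cell validity condition to conclude that the leftmost rectangle of $\R(j',C)$ cannot sit in $X$. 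Granted this, the ``all-or-none'' condition for the reduced DP-cell is inherited directly from the original, and the remaining validity conditions follow because $\{f_{C'}^{*}\}_{C'}$ is a restriction of $\{f_{C'}\}_{C'}$ to $\C(C,\k+1) \subseteq \C(C,\k)$.

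For (3), any ray $L(I) \subseteq A$ with $L(I) \not\subseteq A(j,C,\k+1)$ must have $I = [s, \beg(C)+\k-1]$ and is intersected by no rectangle in $\R_{\down}(j,C,\k)$ (since $X$ is rectangle-free); feasibility of the original DP-cell thus forces $d(I) \le 0$, so the ray is trivially covered by any solution. The same observation handles the single extra extended-demand constraint indexed by $\C(C,\k) \setminus \C(C,\k+1)$ (if present). Hence $\OPT(j,C,\k+1,\{f_{C'}^{*}\}_{C'})$ is feasible for $(j,C,\k,\{f_{C'}\}_{C'})$, and restricting an optimum of the latter to $\R_{\down}(j,C,\k+1)$ yields a feasible solution for the former of no greater cost, so the two optima coincide. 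The main obstacle will be the $j'>j$ subcase of the no-rectangles-in-$X$ claim, where the structural property may supply a strict ancestor of $C$ rather than $C$ itself, necessitating a careful interval-containment argument along the decomposition of the ancestor into its children.
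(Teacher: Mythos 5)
There are two genuine gaps, and both stem from missing the one observation that drives the paper's proof: since the jobs are indexed in order of their (pairwise distinct) release times, \emph{every} rectangle contained in $A:=A(j,C,\k)$ belongs to some job $j''\ge j$ and hence has leftmost $x$-coordinate at least $r_{j''}\ge r_j$; combined with the fact that $A$ contains some rectangle (so $\R(j,C)$ contributes one, forcing $\beg(C)+\k-1\le r_j\le\en(C)-1$) and with non-canonicity (which rules out $\beg(C)+\k-1=r_j$), this gives $\beg(C)+\k\le r_j\le \en(C)-1$ in one stroke. From this, both $\k<K$ and the emptiness of your strip $X$ follow immediately for \emph{all} jobs $j'\ge j$ simultaneously. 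Your proposal instead leaves the $j'>j$ subcase of the no-rectangle-in-$X$ claim open (you yourself flag it as ``the main obstacle''), and the segment-nesting route you sketch for it is unnecessary: the release-time monotonicity settles it without any interval-containment analysis.

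The second gap is that your argument for $\k<K$ is not just a detour but incorrect. You rely on $|\mathrm{Seg}(j',C)|\ge K$, but the multiple-of-$K$ property cannot be applied to the level-$\ell_{\max}$ cell containing $r_{j'}$: there the segments are exactly the unit intervals $[t,t+1)$ with $r_{j'}\le t\le\en(C)-1$, so a job with $r_{j'}=\en(C)-1$ has a \emph{single} unit segment in $C$, which fits comfortably in the one-column region $A$ when $\k=K=\len(C)$. A telltale sign that the counting argument cannot work is that it never uses non-canonicity, yet canonical DP-cells with $\ell(C)=\ell_{\max}$, $\k=K$ and a rectangle in $A$ (namely $r_j=\beg(C)+K-1$) do exist; the conclusion $\k<K$ genuinely needs the hypothesis that the DP-cell is non-canonical, exactly as in the derivation $\beg(C)+\k\le r_j\le\en(C)-1$ above. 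Your treatment of the rays in the strip (trivially satisfiable because they meet no rectangle of $\R_{\down}(j,C,\k)$) and of the restricted values $\{f^*_{C'}\}_{C'}$ matches the paper and is fine once the rectangle-freeness of $X$ is actually established.
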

\begin{proof}
	From the definition of $\C(C, k)$ we conclude that the intervals in $\C(C, k)$ are the intervals $[t, t+1)$ with $t \in \N$ and $\beg(C) \leq t \leq \en(C)-1$. 
    We assumed that $A(j,C,\k)$ contains at least one rectangle. It can only contain rectangles in sets $\R(j')$ with $j' \geq j$. 
    As the jobs are ordered by their release times and the job release times are pairwise distinct we know that $r_j <r_{j'}$. 
    So the leftmost $x$-coordinate of each rectangle belonging to some job $j' \geq j$ is at least $r_j$. 
    Thus $A(j,C,\k)$ can only contain rectangles whose leftmost $x$-coordinate is at least $r_j$. 
    As $A(j, C, \k)$ contains a rectangle, we obtain $r_j \leq \en(C)-1$. So $A(j, C, \k)$ contains a rectangle in $\R(j,C)$.
    Due to the the last property in the definition of the DP-cell $(j,C,\k,\{f_{C'}\}_{C'})$
    all rectangles in $\R(j, C)$ are contained in $A(j, C, \k)$. 
  Also, since the values $\{f_{C'}^{*}\}_{C'}$ are taken from the values $\{f_{C'}\}_{C'}$, the upper bounds for them are automatically respected.

	The segments in $\mathrm{Seg}(j, C)$ are the intervals $[t, t+1)$ with $t \in \N$ and $r_j \leq t \leq \en(C)-1$. 
    As all rectangles in $\R(j, C)$ are contained in $A(j, C, \k)$ we obtain $\beg(C)+k-1 \leq r_j$. As $\beg(C)+k-1 = r_j$ would imply that the DP-cell $(j,C,\k,\{f_{C'}\}_{C'})$ is canonical which we assumed to be not the case, this implies $\beg(C)+k \leq r_j$. 
    As we already showed $r_j \leq \en(C)-1$, this yields $\beg(C)+k\leq \en(C)-1$ and thus $k < \en(C)-\beg(C)$. This implies that $A(j, C, k+1)\neq \emptyset$ and $k < \en(C)-\beg(C)\leq K$.
		 
	We know that every rectangle contained in $A(j, C, \k)$ has a leftmost $x$-coordinate of at least $r_j$ and as $\beg(C)+k \leq r_j$ it is also contained in $A(j, C, \k+1)$. 
    This yields $\R_{\down}(j, C, \k)=\R_{\down}(j, C, \k+1)$. As the DP-cell $(j,C,\k,\{f_{C'}\}_{C'})$ exists, this implies the existence of the DP-cell $(j,C,\k+1,\{f_{C'}^{*}\}_{C'})$. 
		 
	In the remaining part we show that the sub-problems $(j,C,\k,\{f_{C'}\}_{C'})$ and $(j,C,\k+1,\{f_{C'}^{*}\}_{C'})$ have the same set of solutions. 
    We already proved $\R_{\down}(j, C, \k)=\R_{\down}(j, C, \k+1)$. So consider a ray $L(I)$ for an interval $I$ with $L(I) \subseteq A(j, C, \k)$. 
    If $L(I) \subseteq A(j, C, \k) \setminus A(j, C, \k+1)$ then it does not intersect any rectangles so we do not need to check it by Lemma~\ref{lem:empty-ray}. 
    And if $L(I) \subseteq A(j, C, \k+1)$ then the demand for this ray is the same for both sub-problems. 
    Consider an interval $C' \in \C(C, \k)$ and let $t \in C'$ such that there is a ray $L([r_j, t])$. Then $t \geq r_j$ and thus $C' \in \C(C, \k+1)$. 
    So the set of rays with additional demand (of $f_C'$ etc.) are also the same. Altogether, the sub-problems $(j,C,\k,\{f_{C'}\}_{C'})$ and $(j,C,\k+1,\{f_{C'}^{*}\}_{C'})$ have indeed the same set of solutions. 
    This also implies that they have the same optimal solutions.
\end{proof}

Finally, we output the solution in the cell
$(1,C_0,1,\{f^0_{C'}\}_{C'})$ with $f^0_{C'}=0$ for each $C'$.
Our dynamic program yields the following lemma.
\begin{lem}
We can compute an optimal solution to an instance of our geometric
covering problem in time $(nP)^{O(K^2)} \le (nP)^{O_{\epsilon(1)}}$.
\end{lem}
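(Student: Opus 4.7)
The plan is to prove the lemma in three steps: bound the number of DP-cells, bound the work at each cell, and verify that the value stored at the root cell solves the original covering instance.

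First I would count the DP-cells. The job index $j$ contributes at most $n+1$ values; $\C$ has $O_{\epsilon}(nP)$ cells because $\mathrm{len}(C_{0})=O_{\epsilon}(nP)$ and lengths shrink by a factor $K$ at every level, so $\ell_{\max}=O(\log_{K}(nP))$ and the total cell count is bounded by $K^{\ell_{\max}+1}=O_{\epsilon}(nP)$; and $\k$ contributes a factor $K$. The quantitatively central point is the bound $|\C(C,\k)|\le K^{2}$ in all three defining cases: when $\ell(C)\le\ell_{\max}-2$ the set consists of at most $K\cdot K$ grandchildren of $C$; when $\ell(C)=\ell_{\max}-1$ the at most $K$ relevant children each have length at most $K$ and so contribute at most $K^{2}$ unit intervals in total; and when $\ell(C)=\ell_{\max}$ there are at most $K$ unit intervals. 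Since each $f_{C'}$ lies in $\{0,\dotsc,nP\}$, the tuple $\{f_{C'}\}_{C'\in\C(C,\k)}$ attains at most $(nP+1)^{K^{2}}$ values, giving $(nP)^{O(K^{2})}$ DP-cells in total.

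Next I would bound the work per cell using the already-established recurrences. For DP-cells whose area $A$ contains no rectangle, Lemma~\ref{lem:empty-ray} justifies storing the empty solution. For canonical DP-cells, Lemma~\ref{lem:DP-canonical} lets us enumerate the at most $|\R(j,C)|+1=O(K^{2})$ prefix choices, compute the induced $\{f^{*}_{C'}\}$ in $O(K^{2})$ time per choice, and look up $(j+1,C,\k,\{f^{*}_{C'}\})$; a single minimum over these candidates gives the answer. Non-canonical DP-cells with $\ell(C)<\ell_{\max}$ are handled by Lemma~\ref{lem:DP-non-canonical-1}: a trivial $O(K^{2})$-time relabeling of $\{f_{C'}\}$ to $\{g^{*}_{C''}\}$ and $\{f^{*}_{C'}\}$ followed by the union of at most two already-computed DP-cells. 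Non-canonical DP-cells with $\ell(C)=\ell_{\max}$ are handled by Lemma~\ref{lem:DP-non-canonical-ell-max} through a direct one-cell reduction. The per-cell work is therefore polynomial in $K$ and $nP$, and every lookup refers to a previously computed cell if we fix the evaluation order: process $j$ from $n+1$ downwards (canonical reductions decrease $j$), within each fixed $j$ process non-canonical cells by decreasing $\k$ (the splitting references $\k+1$) and bottom-up in the grid (the splitting references $(j,C^{(\k)},1,\cdot)$ at smaller scale).

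Finally, the overall answer is read from the cell $(1,C_{0},1,\{f^{0}_{C'}\}_{C'})$ with $f^{0}_{C'}=0$: by construction $A(1,C_{0},1)=C_{0}\times[1,\infty)$ contains every rectangle, the first three sub-problem conditions exactly reproduce the geometric covering problem, and the fourth adds no extra demand when all $f^{0}_{C'}$ vanish. Multiplying the cell count $(nP)^{O(K^{2})}$ by the polynomial per-cell work yields the claimed running time $(nP)^{O(K^{2})}=(nP)^{O_{\epsilon}(1)}$. The main obstacle in the proof is precisely the tight structural bound $|\C(C,\k)|\le K^{2}$: this is what keeps the state space pseudo-polynomial and, as emphasized in the introduction, is the reason only $O_{\epsilon}(1)$ numbers suffice to summarize the remaining ray demands, replacing the smoothing and forgetting techniques of prior work.
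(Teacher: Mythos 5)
Your proposal is correct and follows essentially the same approach as the paper: bound the number of DP-cells by $(nP)^{O(K^2)}$ using $|\C(C,\k)|\le K^2$, observe that each cell is filled in time polynomial in $n$, $P$, $K$ via the transition rules of Lemmas~\ref{lem:DP-canonical}--\ref{lem:DP-non-canonical-ell-max} in decreasing lexicographic order of $(j,\ell(C),\k)$, and read off the answer at $(1,C_0,1,\{0\}_{C'})$. The only detail the paper makes explicit that you elide is that some enumerated candidates may be infeasible for a given DP-cell, in which case one stores that the cell has no feasible solution; this does not affect correctness of your argument.
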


\begin{proof}
The number of cells in the hierachical decomposition is bounded by $O(K^2 nP)$.
For each DP-cell, there are at most $(nP)^{K^2}$ many values for $\{f_{C'}\}_{C'}$ to be considered.
Thus, the number of DP-cells $(j, C, k, \{f_{C'}\}_{C'})$ is at most $O(n \cdot K^2 nP \cdot K \cdot (n P)^{K^2}) \le (nP)^{O(K^2)}$.
We fill in the DP-cells starting with those of the base case; then we compute the entries of the DP-cells in decreasing lexicographic order of $(j, \ell(C), k)$, following the transition  rules of Lemmas~\ref{lem:DP-canonical}-\ref{lem:DP-non-canonical-ell-max}
and guess $\mathcal{R}^{*}(j,C)$ (for which there are at most $K^2=O_{\epsilon}(1)$ options) when we apply Lemma~\ref{lem:DP-canonical}. Note that in some cases, our constructed (candidate) solution is infeasible for the respective given DP-cell. In this case we store that the given DP-cell does not have a feasible solution.
Guessing the necessary values and applying the rules above can be done in a running time which is polynomial in 
in $n$, $P$, and $K$. Thus, our overall running time is $(nP)^{O(K^2)}$.
\end{proof}
Together with Lemma~\ref{lem:reduction} this yields our main theorem.

\begin{theorem} There is a polynomial time $(1+\epsilon)$-approximation
algorithm for weighted flow time on a single machine when preemptions
are allowed. \end{theorem}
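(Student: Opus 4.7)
The plan is straightforward: combine the pseudo-polynomial time exact algorithm for the geometric covering problem (which we just obtained from the DP lemma above) with the reduction in Lemma~\ref{lem:reduction}. Since the DP computes an \emph{optimal} solution to the geometric covering instance in time $(nP)^{O(K^2)} = (nP)^{O_\epsilon(1)}$, it is a pseudo-polynomial time $1$-approximation algorithm in the sense required by Lemma~\ref{lem:reduction}.

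Concretely, I would first invoke Lemma~\ref{lem:release-times} to assume that all release times are pairwise distinct (incurring the first factor of $1+\epsilon$ and inflating $P$ by a factor $n/\epsilon$, which only affects the running time polynomially). Next, I would apply the reduction of~\cite{RohwedderW21} to produce an instance of the geometric covering problem described in Section~2; running our DP on this instance produces an optimal covering in pseudo-polynomial time. Plugging this algorithm into Lemma~\ref{lem:reduction} as the black box with $c=1$ converts the pseudo-polynomial time exact algorithm into a polynomial time algorithm whose approximation ratio is $1 \cdot (1+\epsilon) = 1+\epsilon$ for weighted flow time on a single machine with preemption.

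Since the target approximation ratio is $1+\epsilon$ for arbitrary constant $\epsilon>0$, one can absorb the constant losses from Lemma~\ref{lem:release-times} and Lemma~\ref{lem:reduction} by reparameterizing $\epsilon$ (e.g.\ running the construction with $\epsilon' = \epsilon/c$ for a suitable constant $c$); this only blows up the hidden constants in the running time. There is essentially no obstacle in this final step: all the hard work has been done in establishing the exact pseudo-polynomial time DP (Lemmas~\ref{lem:DP-canonical}--\ref{lem:DP-non-canonical-ell-max}) and in the previously known reduction machinery. The proof therefore consists only of chaining these pieces together and observing that the overall running time $(nP)^{O_\epsilon(1)}$ becomes polynomial after applying the polynomialization step of Feige, Kulkarni, and Li~\cite{feige2019polynomial} encapsulated in Lemma~\ref{lem:reduction}.
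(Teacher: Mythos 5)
Your proposal is correct and matches the paper's argument exactly: the paper derives the theorem by combining the exact pseudo-polynomial DP for the geometric covering problem (a $c=1$ black box) with Lemma~\ref{lem:reduction}, absorbing the constant number of $(1+\epsilon)$ losses by rescaling $\epsilon$. Nothing is missing.
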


\bibliographystyle{plain}
\bibliography{references}

\end{document}